\numberwithin{equation}{section}
\newtheorem{theorem}{Theorem}
\newtheorem{corollary}[theorem]{Corollary}
\newtheorem{lemma}[theorem]{Lemma}
\newtheorem{observation}[theorem]{Observation}
\theoremstyle{definition}
\newtheorem{definition}[theorem]{Definition}
\crefname{claim}{claim}{claims}
\crefname{observation}{observation}{observations}
\renewcommand{\epsilon}{\ensuremath\varepsilon}
\renewcommand{\phi}{\ensuremath{\varphi}}
\renewcommand{\epsilon}{\ensuremath{\varepsilon}}
\renewcommand{\theta}{\ensuremath{\vartheta}}
\newcommand{\matousek}{Matou\v{s}ek}
\newcommand{\szabo}{Sz\'abo}
\newcommand{\xor}{\oplus}
\newcommand{\bitwiseand}{\odot}
\DeclareRobustCommand\zerovec{\mathbf{0}}
\DeclareRobustCommand\onesvec{\mathbf{1}}
\DeclareMathOperator{\Span}{span}
\title{Realizability Makes a Difference: A Complexity Gap for Sink-Finding in USOs}
\let\anonymous\undefined 
    \author{Redacted}
    \affil{Affiliation\\ \texttt{email@adre.ss}}
    \author{Simon~Weber}
    \affil{Department of Computer Science\\ ETH Zürich\\ \href{mailto:simon.weber@inf.ethz.ch}{\texttt{simon.weber@inf.ethz.ch}}}
    \author{Joel~Widmer}
    \affil{Department of Mathematics\\ ETH Zürich\\ \href{mailto:joewidme@student.ethz.ch}{\texttt{joewidme@student.ethz.ch}}}
\date{}
\begin{document}

\maketitle
\thispagestyle{empty}
\setcounter{page}{0}
\begin{abstract}
    Algorithms for finding the sink in Unique Sink Orientations (USOs) of the hypercube can be used to solve many algebraic and geometric problems, most importantly including the {P\nobreakdash-Matrix} Linear Complementarity Problem and Linear Programming.
    The \emph{realizable} USOs are those that arise from the reductions of these problems to the USO sink-finding problem. Finding the sink of realizable USOs is thus highly practically relevant, yet it is unknown whether realizability can be exploited algorithmically to find the sink more quickly.
    However, all (non-trivial) known unconditional lower bounds for sink-finding make use of USOs that are provably not realizable. This indicates that the sink-finding problem might indeed be strictly easier on realizable USOs.
    
    In this paper we show that this is true for a subclass of all USOs.
    We consider the class of \matousek{}-type USOs, which are a translation of \matousek{}'s LP-type problems into the language of USOs.
    We show a query complexity gap between sink-finding in \emph{all},
    and sink-finding in only the \emph{realizable} $n$-dimensional \matousek{}-type USOs.
    We provide concrete deterministic algorithms and lower bounds for both cases, and show that in the realizable case~$O(\log^2n)$ vertex evaluation queries suffice, while in general exactly~$n$ queries are needed.
    The \matousek{}-type USOs are the first USO class found to admit such a gap.
\end{abstract}

\ifdefined\anonymous
\else
{\footnotesize
\vfill
\textbf{Acknowledgments.}
This research is supported by the Swiss National Science Foundation under project no.~204320. We thank Bernd Gärtner for his very helpful insights and feedback.
}
\fi

\newpage

\section{Introduction}
A Unique Sink Orientation (USO) is an orientation of the hypercube graph with the property that each face has a unique sink. The most studied algorithmic problem related to USOs is that of finding the sink: An algorithm has access to a \emph{vertex evaluation oracle}, which can be queried with a vertex and returns the orientation of all incident edges. The task is to determine the unique sink of the USO using as few such vertex evaluation queries as possible.

Progress on this problem has stalled for a long time. Since \szabo{} and Welzl introduced USOs in~2001~\cite{szabo2001usos}, their deterministic and randomized algorithms --- both requiring an exponential number of queries in terms of the hypercube dimension --- are still the best known in the general case. Only for special cases, such as acyclic USOs, better algorithms are known~\cite{gaertner2002simplex}.

\paragraph{Realizability.} 
The study of USOs and the sink-finding problem was originally motivated by a reduction of the P-Matrix Linear Complementarity Problem (P-LCP) to sink-finding in USOs~\cite{stickney1978digraph}.
Many widely studied optimization problems have since been shown to be reducible either to the P-LCP or to sink-finding in USOs directly, the most notable example being Linear Programming~(LP)~\cite{gaertner2006lpuso}, but also Convex Quadratic Programming~\cite{schurr2004phd} or the Smallest Enclosing Ball problem~\cite{gaertner2001enforcing}. To make progress on this wide array of problems, we do not need to find an algorithm which can find the sink quickly in \emph{all} USOs, but only in the USOs which can arise from these reductions. We call the USOs generated by the reduction from P-LCP \emph{realizable}. The realizable USOs encompass the USOs arising from all the aforementioned reductions.

The number of $n$-dimensional realizable USOs is much smaller than the total number of USOs, namely $2^{\Theta(n^3)}$ in contrast to $2^{\Theta(2^n\log n)}$~\cite{foniok2014counting}. Furthermore, a simple combinatorial property, the Holt-Klee condition, is known to hold for all realizable USOs~\cite{gaertner2008grids,holt1999klee}. Lastly, all (non-trivial) known unconditional lower bound constructions, including the best-known lower bound for deterministic algorithms of $\Omega(n^2/\log n)$~\cite{schurr2004quadraticbound}, may produce USOs which fail the Holt-Klee condition and are therefore not realizable. These three facts indicate that it could be possible to algorithmically exploit the features of realizable USOs to beat the algorithms of \szabo{} and Welzl on this important class of USOs.

For P-LCP and certain cases of LP, the fastest deterministic combinatorial algorithms are already today based on sink-finding in USOs~\cite{fearnley2018ueopl,gaertner2006lpuso}. Improved sink-finding algorithms for realizable USOs would directly translate to advances in P-LCP and LP algorithms. In particular, a sink-finding algorithm using polynomially many vertex evaluations on any realizable USO would imply the existence of a strongly polynomial-time algorithm for LP, answering a question from Smale's list of \emph{mathematical problems for the next century}~\cite{smale1988problems}.

\paragraph{\matousek{}(-type) USOs.} In 1994, \matousek{} introduced a family of LP-type problems to show a superpolynomial lower bound on the runtime of the Sharir-Welzl algorithm~\cite{matousek1994lowerbound}. This result was later translated into the framework of USOs, where the corresponding class of USOs is now called the \emph{\matousek{}~USOs} and the lower bound translates to the query complexity of the \emph{\textsc{Random~Facet}} algorithm~\cite{gaertner2002simplex}. In the same paper it was also shown that the sink of all \matousek{} USOs that fulfill the Holt\nobreakdash-Klee property (thus including all realizable ones) is found by the \textsc{Random Facet} algorithm in a quadratic number of queries. Therefore, \textsc{Random Facet} is strictly faster on the realizable \matousek{} USOs than on all \matousek{} USOs. In this paper we aim to provide a similar result for the query complexity of the problem itself, instead of a concrete algorithm.

The \matousek{} USOs all have the sink at the same vertex. This does not pose a problem when analyzing a fixed algorithm  which does not exploit this fact (e.g., \textsc{Random Facet}), but it does not allow us to derive algorithm-independent lower bounds. To circumvent this issue, we consider the class of \emph{\matousek{}-type USOs}, which simply contains all orientations isomorphic to classical \matousek{} USOs.

It was recently discovered that all \matousek{}-type USOs fulfilling the Holt-Klee property are also realizable~\cite{weber2021matousek}, showing that the Holt-Klee property is not only necessary but also sufficient for realizability in \matousek{}-type USOs. The proof of this result employed a novel view on \matousek{}-type USOs, describing such a USO completely by (i) the location of the sink, and (ii) a directed graph with the $n$ dimensions of the hypercube as vertices, called the \emph{dimension influence graph}. We make heavy use of this view in the proofs of our results.

\paragraph{Results.}
We show a query complexity gap between sink-finding on the realizable and sink-finding on all \matousek{}-type USOs.
We achieve this by proving the following two main theorems:

\begin{restatable}{theorem}{lowerbound}
\label{thm:lowerbound} For every deterministic sink-finding algorithm $\mathcal{A}$ and any $n\geq 2$, there exists some $n$-dimensional \matousek{}-type USO on which $\mathcal{A}$ requires at least $n$ vertex evaluations to find the sink.
\end{restatable}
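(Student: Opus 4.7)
The plan is to prove the lower bound via a standard adversary (decision-tree) argument: the adversary maintains a set $\mathcal F$ of candidate \matousek{}-type USOs that are all consistent with the query answers given so far, and answers each new query so that at most one candidate is eliminated. Starting from $|\mathcal F| = n+1$ candidates with pairwise distinct sinks, after $n-1$ queries at least two candidates with distinct sinks remain, so a deterministic algorithm $\mathcal A$ cannot yet commit to a correct sink and must issue an $n$-th query, which yields a \matousek{}-type USO consistent with all answers on which $\mathcal A$ spends at least $n$ vertex evaluations.

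To construct $\mathcal F$, I would use the characterization recalled in the introduction: an $n$-dimensional \matousek{}-type USO is described by a sink location $s$ together with a dimension influence graph $G$ on the coordinate set $[n]$. I would build $\mathcal F$ from a common ``base'' USO by perturbing it $n$ times, each candidate shifting the sink along a chosen coordinate $i$ while absorbing the resulting change by a single local modification of $G$ involving $i$. The family should be designed to satisfy a \emph{near-unanimity property}: at every vertex $v \in \{0,1\}^n$, the outmaps of the USOs in $\mathcal F$ agree on all but at most one candidate. Given this property, the adversary's strategy is simply to return, on each query $v$, the outmap attained by all but at most one member of $\mathcal F$, thereby eliminating at most one candidate per query; a one-line counting argument then yields the theorem.

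The main obstacle is establishing the near-unanimity property. This requires an explicit understanding of how the outmap of a \matousek{}-type USO depends jointly on its sink and dimension influence graph, and a careful argument that a sink shift along $e_i$ can be cancelled by compensating local edge perturbations of $G$ so that the net effect on outmaps is confined to a single candidate-specific vertex. Secondary subtleties include verifying that each perturbed graph is still a valid dimension influence graph, and handling the case where $\mathcal A$ happens to query a candidate sink, in which case the adversary must veto that single candidate without inadvertently contradicting any of the others.
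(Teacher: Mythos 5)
Your approach is genuinely different from the paper's, and it has a gap that I do not see how to repair: the near-unanimity property you need does not hold for \matousek{}-type USOs, even in the smallest case $n=2$.

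Here is the obstruction. Consider any family $\mathcal F$ of \matousek{}-type USOs with pairwise distinct sinks $s_1,\dots,s_{n+1}$. At the vertex $s_i$, the USO $u_i$ has outmap $\zerovec$, while every other $u_j$ has a nonzero outmap there (since $s_i$ is not its sink). If near-unanimity held at $s_i$ with some deviator other than $u_i$, then the majority outmap at $s_i$ would be $\zerovec$, forcing at least two USOs to have their sink at $s_i$, a contradiction. So $u_i$ is \emph{forced} to be the unique deviator at its own sink, which means all other $n$ USOs must agree on a common nonzero outmap at $s_i$. For $n=2$ these constraints are already inconsistent: the only invertible ``reflexive-closure-of-acyclic'' matrices are $I$, $\bigl(\begin{smallmatrix}1&1\\0&1\end{smallmatrix}\bigr)$, and $\bigl(\begin{smallmatrix}1&0\\1&1\end{smallmatrix}\bigr)$, and for every choice of three distinct sinks (say $00,01,10$, but the other triples behave the same) one checks directly that the system of agreement constraints at the three sink vertices has no solution. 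So no family of $n+1=3$ two-dimensional \matousek{}-type USOs with distinct sinks satisfies near-unanimity, and your adversary never gets off the ground for $n=2$, which is part of the claimed range.

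More broadly, the structural tension is that a fixed, query-oblivious family cannot exploit the linear-algebraic structure that makes the problem tractable. The paper's proof takes a different and essentially algebraic route: it first establishes (\cref{thm:equivOfAlgorithms}) that sink-finding in \matousek{}-type USOs with $f(n)$ vertex evaluations is equivalent to solving $Mx=y$ with $f(n)-1$ matrix--vector queries, and then runs an \emph{adaptive} adversary (\cref{alg:adversarial}) that modifies the hidden matrix $M$ by a rank-one update $e_j {z^{(k)}}^T$ chosen orthogonal to all previous queries. This update leaves all earlier replies unchanged (consistency), keeps $M$ acyclic with loops (legality), and breaks any candidate solution the algorithm might be closing in on (uncertainty). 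The freedom to change $M$ after seeing each query is exactly what a fixed family of size $n+1$ cannot replicate. If you want to salvage a decision-tree argument, you would at minimum need the family $\mathcal F$ to be constructed adaptively and its size to be far larger than $n+1$, at which point you are essentially reinventing the paper's adversary.
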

\begin{restatable}{theorem}{sublinear}
\label{thm:sublinear} There exists a deterministic algorithm finding the sink of any $n$-dimensional \emph{realizable} \matousek{}-type USO using $O(\log^2 n)$ vertex evaluations in the worst case.
\end{restatable}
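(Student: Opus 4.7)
The plan is to leverage the novel structural viewpoint on \matousek{}-type USOs introduced in~\cite{weber2021matousek}: such a USO is determined completely by its sink $s \in \{0,1\}^n$ and its dimension influence graph $G$, with realizability characterized by a combinatorial condition on $G$. Writing $M \in \mathbb{F}_2^{n \times n}$ for the adjacency matrix of $G$, I will work with the linear-algebraic description $\phi(v) = (I \oplus M)(v \oplus s)$ of the outmap, so that $I \oplus M$ is invertible over $\mathbb{F}_2$ whenever the USO is realizable.

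The first ingredient is the identity $\phi(v) \oplus \phi(0^n) = (I \oplus M) v$: once $w := \phi(0^n) = (I \oplus M) s$ has been obtained from an initial query at $0^n$, any subsequent query at a vertex $u$ reveals $Mu = u \oplus \phi(u) \oplus w$, completely bypassing the dependence on $s$. Expressing the inverse as a Neumann series $s = \bigoplus_{k \ge 0} M^k w$ (finite by nilpotency of $M$) gives a baseline algorithm which iteratively queries the current partial sum $\bigoplus_{j < k} M^j w$ and uses one query per nonzero term, i.e.\ on the order of the nilpotency index of $M$. This is only $O(n)$ in the worst case, so the heart of the proof must lie elsewhere.

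To reach $O(\log^2 n)$, I would set up a halving recursion: use $O(\log n)$ adaptively chosen queries to determine roughly half of the coordinates of $s$ at once, then reduce the remaining task to sink-finding on a face of dimension $\lceil n/2 \rceil$ which inherits the realizable \matousek{}-type structure. The recurrence $T(n) \le T(\lceil n/2 \rceil) + O(\log n)$ then yields $T(n) = O(\log^2 n)$. The main obstacle is carrying out this halving step with only logarithmically many queries: acyclicity of $G$ alone does not suffice, since for a chain-shaped $G$ only one source coordinate can be pinned down per round. The crucial ingredient will be to show that the Holt-Klee condition forced by realizability guarantees a balanced structural decomposition of $G$, and to convert this into an adaptive query scheme that extracts $\Omega(n)$ coordinates of $s$ from $O(\log n)$ vertex evaluations. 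Building this combinatorial-algorithmic bridge between realizability and a balanced dimension separator is where the bulk of the work lies.
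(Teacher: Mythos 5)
Your setup is sound and mirrors the paper's: you pass to the dimension influence graph with adjacency matrix $M$, observe that a single query at a reference vertex reduces sink-finding to a matrix--vector oracle problem over $\mathbb{F}_2$, and correctly diagnose that a Neumann-series / antipodal-jump strategy only gives $O(n)$. You also correctly guess that the $O(\log^2 n)$ bound should come from a halving scheme with $O(\log n)$ queries per round. But the load-bearing step --- how to actually halve in $O(\log n)$ queries --- is left as a hope, and the hope as stated is aimed in the wrong direction.

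Two concrete problems. First, you propose to recurse on a face of dimension $\lceil n/2\rceil$ and conjecture that ``the Holt-Klee condition forced by realizability guarantees a balanced structural decomposition of $G$.'' This is not the right structural handle, and your own chain example shows it: a directed path on $n$ vertices \emph{is} the transitive reduction of a perfectly valid realizable dimension influence graph (Holt-Klee is satisfied), yet there is no balanced cut of the dimensions that would let you pin down half the sink coordinates in $O(\log n)$ queries by looking at sources. The actual ingredient the paper uses is \Cref{lem:realizablegraphs}: for realizable Matoušek-type USOs, $G$ is the \emph{reflexive transitive closure of a branching}. This is strictly more structure than Holt-Klee ``generically'' provides, and the algorithm leans on both parts --- the branching gives a well-defined level function (recoverable in $O(\log n)$ queries by a binary-counting trick), and the transitive closure is exactly what makes it possible to \emph{filter out} the contribution of one level-interval on another (\Cref{obs:filtering}).

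Second, your recurrence $T(n)\le T(\lceil n/2\rceil)+O(\log n)$ implicitly assumes you can discard one half after each round and recurse only on the other. But if the goal is to recover the sink (or the matrix $M$), you generally have to make progress on \emph{both} halves. Naively that gives $T(n)\le 2T(n/2)+O(\log n)=O(n)$, no better than before. The paper's divide-and-conquer is over \emph{levels}, not dimensions, and crucially keeps all subproblems alive simultaneously: one matrix--vector query per round serves every active level-interval at once, because the transitive-closure filtering lets the algorithm disentangle the responses. That parallelism is what turns $2T(n/2)$ back into $T(n/2)$ and is the heart of the $O(\log^2 n)$ bound. The paper also recovers the entire matrix $M$ (hence the full USO) and then computes the sink by Gaussian elimination offline, rather than isolating sink coordinates directly. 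Without the branching/transitive-closure characterization and the filtering observation that enables parallel queries, your sketch does not yet close.
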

\noindent
In addition, we show that the result about general \matousek{}-type USOs is tight. For the realizable case, we provide a simple lower bound of $\Omega(\log n)$ vertex evaluations.

\paragraph{Discussion.}
The \matousek{}-type USOs form the first known USO class admitting such a complexity gap. We hope that this result motivates further research into tailored algorithms using the property of realizability for larger, more relevant classes of USOs.

Note that an artificial class of USOs exhibiting such a complexity gap could easily be constructed by combining a set $R$ of easy-to-solve realizable USOs with a set $N$ of difficult-to-solve non-realizable USOs. For $R$, one could take any set of realizable USOs which all have the same vertex as their sink. An algorithm to find the sink of USOs in $R$ could then always output this vertex without needing to perform any vertex evaluations. For the set $N$, one could take the set of USOs constructed in the lower bound of Schurr and \szabo{}~\cite{schurr2004phd}, and change each USO such that it becomes non-realizable. This can be achieved without destroying the lower bound. The resulting class $R\cup N$ would then also exhibit a complexity gap.

The \matousek{}-type USOs are far away from being such an artificially constructed class of USOs. First off, they are well-studied due to their significance in proving the lower bound for the \textsc{Random Facet} algorithm~\cite{matousek1994lowerbound,gaertner2002simplex}. Second, they can be considered a natural choice for proving unconditional lower bounds for realizable USOs: All known unconditional lower bounds on general USOs use decomposable USOs~\cite{schurr2004quadraticbound}, and the realizable \matousek{}-type USOs are the only known class of realizable decomposable USOs.

Even on a natural USO class, a complexity gap could be trivial, for example if the class contains no (or only very few) realizable USOs. This is also \emph{not} the case for the \matousek{}-type USOs, as there are $2^{\Theta(n\log n)}$ realizable $n$-dimensional \matousek{}-type USOs, while the overall number of \matousek{}-type USOs is $2^{\Theta(n^2)}$. This is a much larger realizable fraction than one observes on the set of all USOs.

The \matousek{}-type USOs neatly connect to the D-cubes, which are a subset of realizable USOs including those arising from the reduction of LP to sink-finding. The \emph{dimension influence graph} encoding the structure of a \matousek{}-type USO can be viewed as a global version of the \emph{L-graphs} used in a recent necessary condition found to hold for all D-cubes~\cite{gao2020dcubes}. The techniques developed in this work to find the sink in the more rigid \matousek{}-type USOs might even be useful in developing algorithms for D-cubes. While all \matousek{}-type USOs fulfill the necessary condition for D-cubes, it remains open whether the realizable \matousek{}-type USOs are in fact D-cubes.

\paragraph{Proof Techniques.}
For both the lower and the upper bounds, we need to view the \matousek{}-type USOs by their dimension influence graphs. We first show an equivalence of finding a certain subset of the vertices in the dimension influence graph to finding the sink in the USO itself. Considering the adjacency matrix $M$ of the dimension influence graph, finding the desired subset of vertices can be viewed as solving a linear system of equations $Mx=y$ over $GF(2)$, where $M$ is only given by a matrix-vector product oracle, answering queries $q\in\{0,1\}^n$ with $Mq$.

For the lower bounds, we provide adversarial constructions to adapt the adjacency matrix of the dimension influence graph to the queries of the algorithm.
Starting with the identity matrix, $M$ is changed after some or all queries to ensure that the algorithm is not able to deduce the solution~$x$. The main technical difficulties are to simultaneously ensure three conditions on a changing $M$: (i)~that the previously given replies are consistent with the new matrix, (ii) that the algorithm still cannot deduce $x$ after the change, and (iii)~that the matrix describes a legal dimension influence graph of a (realizable) \matousek{}-type~USO.

For the upper bound in the realizable case, our algorithm does not find the sink of the USO directly, but instead recovers the whole dimension influence graph and thus the orientation of all USO edges. It can then compute the location of the sink without needing any more queries. The reply to each query contains only $n$ bits of information, and the adjacency matrix~$M$ consists of $n^2$ bits. Any sublinear algorithm discovering the whole adjacency matrix must therefore leverage some additional structure of the graph. We use a previous result stating that the dimension influence graph of every realizable \matousek{}-type USOs is the reflexive transitive closure of a branching~\cite{weber2021matousek}. Thanks to this rigid structure, we can split the problem into multiple subproblems, and we can combine queries used to solve different subproblems into a single query. This allows us to make progress on many subproblems in ``parallel'', leading to fewer queries needed overall.

\paragraph{Paper Overview.} In \Cref{sec:preliminaries}, we lay out the necessary notations and definitions for the paper and prove the equivalence of our considered problem variants. In \Cref{sec:generalcase}, we give matching lower and upper bounds of $n$ vertex evaluations for the case of general \matousek{}-type USOs. In \Cref{sec:realizablecase}, we show the improvements which can be made in the realizable case. Finally, we discuss remaining open questions in \Cref{sec:conclusion}.

\section{Preliminaries}\label{sec:preliminaries}
We begin with some basic notation. All vectors and matrices in this paper are defined over the field $GF(2)$. We write $\bitwiseand$ and $\xor$ for bit-wise multiplication (``and'') and addition (``xor'') in $GF(2)$. By $\zerovec$ (or $\onesvec$) we denote the all-zero (or all-ones) $n$-dimensional vector. By $e_i$ we denote the $i$-th standard basis vector. $I$ denotes the $n$-dimensional identity matrix. For a natural number $x$, we write $Bin(x)_i\in\{0,1\}$ for the $i$-th least significant bit of the binary representation of $x$, such that $\sum_{i=0}^{\infty}Bin(x)_i\cdot 2^i=x.$

\subsection{Orientations and USOs}
The $n$-dimensional hypercube is an undirected graph $(V,E)$ consisting of the vertex set $V=\{0,1\}^n$, where two vertices are connected by an edge if they differ in exactly one coordinate. An orientation of the hypercube assigns a direction to each of the $n2^{n-1}$ edges.

\begin{definition}[Hypercube Orientation]
An \emph{orientation} $o$ of the $n$-dimensional hypercube is described by a function $o:\{0,1\}^n\rightarrow\{0,1\}^n$ assigning each vertex its \emph{outmap}. An edge $(v,v\xor e_i)$ is directed away from $v$ if $o(v)_i=1$. To ensure consistent orientation of all edges, $o$ has to fulfill $o(v)_i \not= o(v\xor e_i)_i$ for all $v\in V$ and $i\in [n]$.
\end{definition}

\begin{definition}[Unique Sink Orientation]
A \emph{Unique Sink Orientation (USO)} is an orientation of the hypercube, such that for each non-empty face $F$ of the hypercube, the subgraph induced by $F$ has a unique sink, i.e., a unique vertex which has no outgoing edges.
\end{definition}

A sink-finding algorithm has access to the orientation function $o$ as an oracle. Given a vertex $v\in\{0,1\}^n$, the oracle returns its outmap $o(v)$. We are only interested in the number of such \emph{vertex evaluation} queries made, and allow the algorithm to perform an unbounded amount of additional computation and put no restriction on the allowed memory consumption.

\begin{definition}[Realizability]
A USO $o$ is \emph{realizable}, if there is a non-degenerate P-Matrix Linear Complementarity Problem (P-LCP) instance~$(M,q)$ such that the reduction of this instance to USO sink-finding produces $o$.
\end{definition}

We omit the formal definitions of the P-LCP and this reduction, as they are not needed to derive or understand our results. This reduction first appeared in the seminal paper of Stickney and Watson~\cite{stickney1978digraph} in 1978.  As their orientations were only named USOs much later~\cite{szabo2001usos}, we point the interested reader to the comprehensive PhD thesis of Klaus~\cite{klaus2012phd}, which uses more modern language.

\subsection{(Realizable) \matousek{}-type USOs}
A \matousek{} USO, as defined by Gärtner~\cite{gaertner2002simplex}, is an orientation $o$ characterized by an invertible, upper-triangular matrix $A\in \{0,1\}^{n\times n}$ (thus all diagonal entries of $A$ are $1$). The matrix defines the orientation $o(v) = Av$. It is easy to see that each principal submatrix of $A$ must also be invertible. This implies that there must be a unique sink in each face of the hypercube. In particular, the whole hypercube has a unique sink at the vertex $\zerovec$.

To eliminate this commonality among \matousek{} USOs, we define the \matousek{}-type USOs, which are all orientations isomorphic to a \matousek{} USO. Isomorphisms on the hypercube allow for mirroring of any subset of dimensions, and for relabeling the dimensions.

\begin{definition}[\matousek{}-type USO]\label{def:matousekuso}
A \emph{\matousek{}-type USO} is an orientation $o$, with
\[\forall v\in\{0,1\}^n:\;\;o(v)=M(v\xor s), \text{ where } M:=PAP^T\]
for some permutation matrix $P$, an invertible, upper-triangular matrix $A\in \{0,1\}^{n\times n}$, and the desired location of the sink $s\in\{0,1\}^n$.
\end{definition}
We can view the matrix $M$ as the adjacency matrix of a directed graph $G=([n],E_M)$ on the dimensions $[n]$, where $(i,j)\in E_M$ if $M_{j,i}=1$. As $A$ is invertible and upper-triangular, and as $M$ is equal to $A$ with rows and columns permuted in the same way, $G$ is an acyclic graph with additional loop edges $(i,i)$ at every vertex $i$. We also say that $G$ is the \emph{reflexive closure} of an acyclic graph.

We call this graph $G$ the \emph{dimension influence graph} of the \matousek{}-type USO. The name is motivated by the following observation.
\begin{observation}
Let $\lambda,\phi\in [n]$ be two distinct dimensions of a \matousek{}-type USO $o$.
For any vertex $v\in\{0,1\}^n$, it holds that
\[o(v)_\lambda\not=o(v\xor e_\phi)_\lambda \Longleftrightarrow M_{\lambda,\phi}=1\Longleftrightarrow (\phi, \lambda)\in E_M.\]
\end{observation}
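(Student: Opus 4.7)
The plan is to unfold the definition of a Matoušek-type USO and exploit the linearity of $M$ over $GF(2)$. By \Cref{def:matousekuso}, $o(v) = M(v \xor s)$, so for the $\lambda$-th coordinate one has $o(v)_\lambda = \sum_{j \in [n]} M_{\lambda, j} (v \xor s)_j$. The same holds with $v$ replaced by $v \xor e_\phi$, and subtracting (equivalently, XORing) the two expressions I would observe that all terms indexed by $j \ne \phi$ vanish because $(v \xor s)_j = (v \xor e_\phi \xor s)_j$ for these indices.

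What remains is a single term: $o(v)_\lambda \xor o(v \xor e_\phi)_\lambda = M_{\lambda, \phi} \cdot (e_\phi)_\phi = M_{\lambda, \phi}$. So the two bits differ precisely when $M_{\lambda, \phi} = 1$, which is exactly the first equivalence. The second equivalence is immediate from the definition of $E_M$ given in the paragraph preceding the observation, where $(i, j) \in E_M$ was defined as $M_{j,i} = 1$; specializing to $(i,j) = (\phi, \lambda)$ yields $M_{\lambda, \phi} = 1$.

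There is no real obstacle here: this is a one-line computation once the definitions are unpacked, and the assumption $\lambda \ne \phi$ is not even needed for correctness of the calculation (though it is natural in context, since the observation interprets $M_{\lambda, \phi}$ as a non-loop edge of the dimension influence graph). The only thing to keep in mind is that all arithmetic is over $GF(2)$, so ``difference'' means XOR and the matrix-vector product uses XOR-sums; this makes the cancellation of all terms with $j \ne \phi$ clean and exact.
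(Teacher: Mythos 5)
Your proof is correct and is the natural calculation that the paper leaves implicit (the paper states this as an unproved observation following directly from \Cref{def:matousekuso}). Unfolding $o(v)=M(v\xor s)$, using linearity of $M$ over $GF(2)$ to get $o(v)\xor o(v\xor e_\phi)=M e_\phi$, and reading off the $\lambda$-th coordinate is exactly the intended argument, and your reading of the edge convention $(i,j)\in E_M\Leftrightarrow M_{j,i}=1$ is also right. Your side remark that the hypothesis $\lambda\ne\phi$ is not needed for the algebra is accurate as well; the case $\lambda=\phi$ degenerates to the always-true loop condition, which is why the paper excludes it.
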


Intuitively, this means that in a \matousek{}-type USO, any $2$-dimensional face spanned by the same two dimensions $\lambda$ and $\phi$ has the same structure: ``Walking along'' an edge in dimension $\phi$ either always changes the direction of the adjacent $\lambda$-edge, or never (see \Cref{fig:dimensioninfluence}). If it always changes, we say that $\phi$ \emph{influences} $\lambda$, and there is an edge from $\phi$ to $\lambda$ in the dimension influence graph. See \Cref{fig:forbidden} for two example dimension influence graphs, their corresponding \matousek{}-type USOs, and their adjacency matrices.

\begin{figure}
    \centering
    \includegraphics{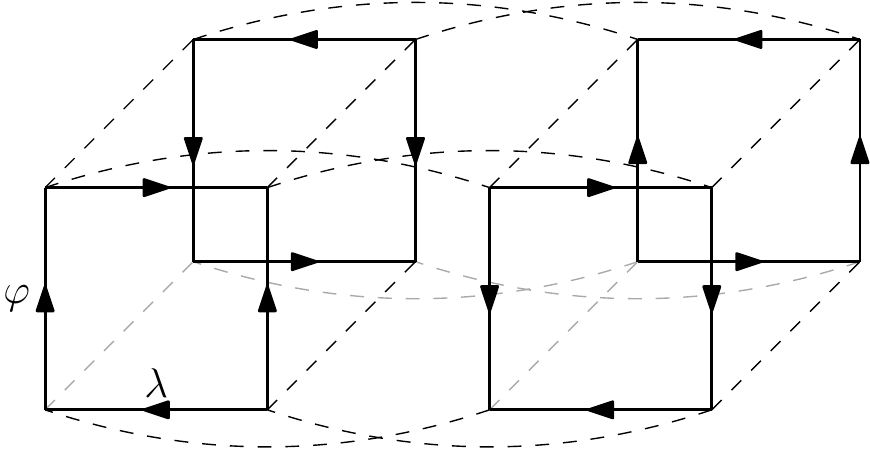}
    \caption{All $2$-faces spanned by $\lambda$ and $\phi$ in this $4$-dimensional \matousek{}-type USO have the same structure: $\phi$ \emph{influences} $\lambda$, but not the other way around.}
    \label{fig:dimensioninfluence}
\end{figure}

As the dimension influence graph $G$ is acyclic (apart from the loops), there must be a sink in every induced subgraph of $G$. Thus, in each face there is a dimension which is not influenced by any other dimension, and all edges of that dimension point in the same direction. We therefore say that each face is \emph{combed}, and that the whole \matousek{}-type USO is \emph{decomposable}.

\begin{figure}
    \centering
    \includegraphics[]{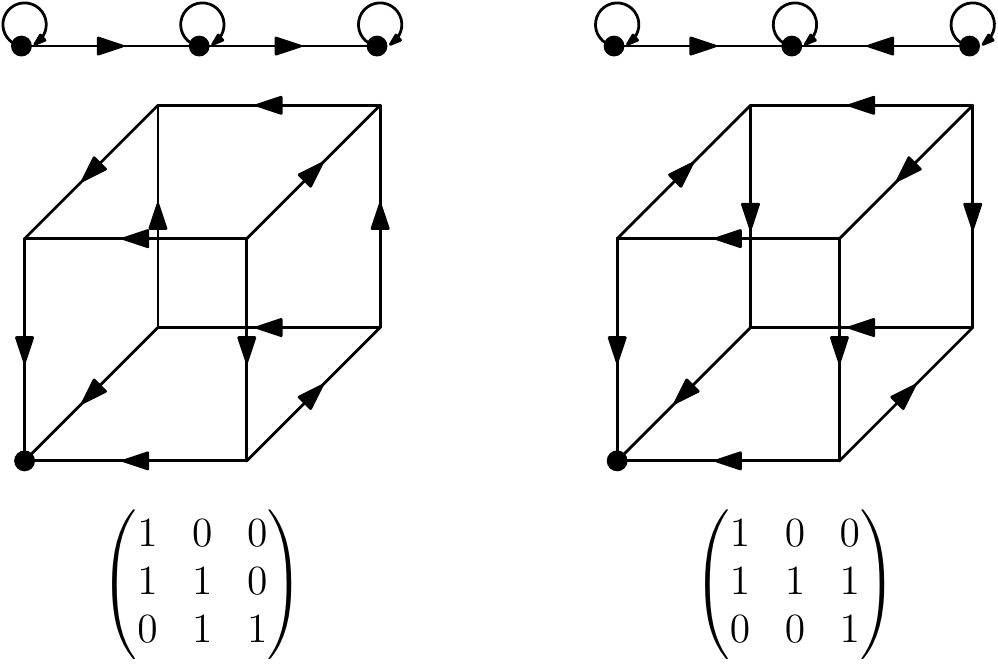}
    \caption{Top: The two forbidden graphs which cannot occur as induced subgraphs in dimension influence graphs of realizable \matousek{}-type USOs. Middle: The corresponding \matousek{}-type USOs with the sink at the bottom left vertex. Bottom: The adjacency matrices of the forbidden graphs. }
    \label{fig:forbidden}
\end{figure}

Weber and Gärtner showed that a \matousek{}-type USO is realizable if and only if its dimension influence graph does not contain one of the two graphs in \Cref{fig:forbidden} as an induced subgraph~\cite{weber2021matousek}. This directly implies the following characterization.
\begin{lemma}[{\cite[Theorem 4.5]{weber2021matousek}}]\label{lem:realizablegraphs}
A graph is the dimension influence graph of a realizable \matousek{}-type USO if and only if it is the reflexive transitive closure of a branching\footnote{A branching is a forest of rooted trees, where all edges are directed away from the roots.}.
\end{lemma}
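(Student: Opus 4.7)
The plan is to combine the forbidden-subgraph characterization of Weber--Gärtner (stated just above the lemma) with a purely graph-theoretic equivalence: among reflexive closures of acyclic graphs, avoiding both forbidden induced subgraphs of \Cref{fig:forbidden} is equivalent to being the reflexive transitive closure of a branching. I would prove both directions of this equivalence separately, then invoke the cited characterization to conclude.

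For the forward direction, I would assume $G$ is the reflexive transitive closure of a branching $B$ and inspect any three distinct vertices $u,v,w$. The forbidden pattern with non-loop edges $u\to v$ and $w\to v$ and no other non-loop edges between the three cannot arise: both $u$ and $w$ would be ancestors of $v$ in $B$, and the ancestors of a fixed vertex in a branching lie on a single root-to-vertex path, so one must be an ancestor of the other and therefore joined to it by an edge in $G$. The non-transitive triple $u\to v\to w$ without $u\to w$ cannot arise either, because a reflexive transitive closure is transitive by definition.

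For the backward direction, assume $G$ has no forbidden induced subgraph. I would first argue $G$ is transitive: if $u\to v$ and $v\to w$ but $u\not\to w$, then acyclicity rules out $v\to u$, $w\to v$, and $w\to u$ (the last would create the cycle $u\to v\to w\to u$), so $\{u,v,w\}$ induces exactly the forbidden non-transitive triple. Next, let $B$ be the transitive reduction of $G$ with self-loops stripped (well-defined on a DAG); I would show $B$ is a branching, i.e., every vertex has at most one in-neighbor in $B$. If some $v$ had parents $u,w$ in $B$, then an edge $u\to w$ in $G$ would make $u\to v$ redundant via $u\to w\to v$ and thus absent from $B$; symmetrically for $w\to u$; and acyclicity forbids $v\to u$ and $v\to w$. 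So $\{u,v,w\}$ would induce the other forbidden pattern, a contradiction. Since $G$ is transitive and $B$ is its transitive reduction on non-loop edges, restoring the self-loops identifies $G$ with the reflexive transitive closure of $B$.

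The step I expect to be most delicate is the case analysis in the backward direction: one must carefully rule out every other potential edge among the three offending vertices — using acyclicity of the dimension influence graph together with the minimality property of the transitive reduction — so that the triple induces \emph{exactly} one of the two forbidden patterns rather than a richer configuration that might not be forbidden. Once this bookkeeping is clean, matching it with the Weber--Gärtner characterization to conclude the ``if and only if'' statement of the lemma is routine.
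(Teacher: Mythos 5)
Your proposal is correct, and it follows exactly the route the paper points to: the paper does not spell out a proof of this lemma, instead citing it as Theorem~4.5 of Weber--Gärtner and remarking that it ``directly'' follows from the forbidden-induced-subgraph characterization stated in the preceding sentence. Your argument supplies precisely that omitted derivation, and the case analysis (using acyclicity to rule out the extra edges in the backward direction, and the single-ancestor-path property of a branching in the forward direction) is sound.
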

See \Cref{fig:arborescences} for an example graph that is the dimension influence graph of a realizable \matousek{}-type USO.

\begin{figure}
    \centering
    \includegraphics{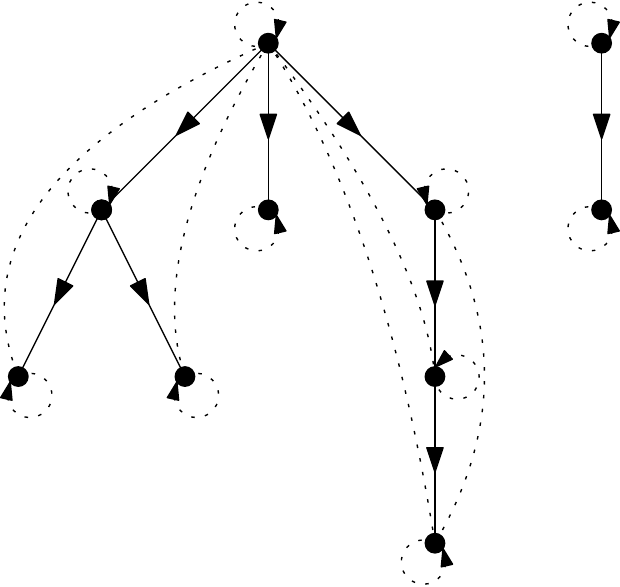}
    \caption{A branching with the edges added by taking the reflexive transitive closure (dashed).}
    \label{fig:arborescences}
\end{figure}

We can easily count the number of (realizable) \matousek{}-type USOs using classical graph-theoretic results. For general \matousek{}-type USOs, the dimension influence graph can be the reflexive closure of any directed acyclic graph.
\begin{lemma}
There are $2^{\Theta(n^2)}$ \matousek{}-type USOs.
\end{lemma}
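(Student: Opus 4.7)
The plan is to prove matching upper and lower bounds, both of order $2^{\Theta(n^2)}$, on the number of \matousek{}-type USOs.

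For the upper bound, I would use the parametrization from \Cref{def:matousekuso}: every \matousek{}-type USO is completely determined by the pair $(M, s)$, where $M = PAP^T$ is an $n \times n$ matrix over $GF(2)$ and $s \in \{0,1\}^n$ is the sink location. There are at most $2^{n^2}$ choices for $M$ (in fact far fewer, since $M$ must be the adjacency matrix of the reflexive closure of a DAG) and $2^n$ choices for $s$, yielding at most $2^{n^2+n} = 2^{O(n^2)}$ \matousek{}-type USOs in total. This is the easy direction.

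For the lower bound, I would restrict attention to classical \matousek{} USOs, i.e., those with $P = I$ and $s = \zerovec$, so that $o(v) = Av$. Here $A$ ranges over all invertible upper-triangular matrices in $\{0,1\}^{n\times n}$; since invertibility over $GF(2)$ is equivalent to all diagonal entries being $1$, this is exactly the set of upper-triangular matrices with $1$s on the diagonal, of which there are $2^{\binom{n}{2}}$. I would then argue that distinct choices of $A$ produce distinct orientations: if $A \neq A'$ they differ in some column $i$, so $o(e_i) = A e_i \neq A' e_i = o'(e_i)$. This yields at least $2^{\binom{n}{2}} = 2^{\Omega(n^2)}$ \matousek{}-type USOs.

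Combining the two bounds gives $2^{\Theta(n^2)}$. I do not anticipate a significant obstacle: both directions are essentially counting arguments, and the only minor subtlety is ensuring that different parameters yield different orientations, which is immediate from the outmap formula $o(v) = M(v \xor s)$ by evaluating at well-chosen vertices.
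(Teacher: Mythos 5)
Your proof is correct, and its lower-bound half takes a genuinely more self-contained route than the paper. The paper counts pairs (dimension influence graph, sink location): it cites the known fact that there are $2^{\Theta(n^2)}$ labelled DAGs on $n$ vertices and multiplies by the $2^n$ sink locations, implicitly using that distinct pairs yield distinct orientations. You instead get the lower bound by restricting to classical \matousek{} USOs ($P=I$, $s=\zerovec$) and counting invertible upper-triangular matrices over $GF(2)$ directly, which gives $2^{\binom{n}{2}}$, together with an explicit injectivity check via $o(e_i)=Ae_i$; your upper bound via the parametrization $(M,s)$ of \Cref{def:matousekuso} matches the paper's implicit one. What your version buys is independence from the external counting result for labelled DAGs and an explicit verification that distinct parameters give distinct orientations (a point the paper glosses over); what the paper's version buys is brevity and consistency with the adjacent lemma counting realizable \matousek{}-type USOs, which is likewise phrased in terms of dimension influence graphs. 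Note that your lower-bound construction is essentially the easy direction of the DAG count anyway (DAGs compatible with one fixed topological order), so the two arguments are close in spirit, but yours stands alone without a citation.
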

\begin{proof}
Each pair of dimension influence graph and sink location describes a distinct \matousek{}-type USO. There are $2^n$ sink locations, and there are $2^{\Theta(n^2)}$ labelled directed acyclic graphs on $n$ vertices~\cite{stanley1973numberofdags}. Thus there are $2^n\cdot 2^{\Theta(n^2)}\in 2^{\Theta(n^2)}$ \matousek{}-type USOs.
\end{proof}

Similarly, a realizable \matousek{}-type USO can be described by the branching underlying the dimension influence graph, and the sink location.
\begin{lemma}
There are $2^n\cdot (n+1)^{n-1}\in 2^{\Theta(n\log n)}$ realizable \matousek{}-type USOs.
\end{lemma}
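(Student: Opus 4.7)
The plan is to count realizable \matousek{}-type USOs by decomposing them into their two defining pieces (the dimension influence graph and the sink location) and then counting each piece separately.

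First, I would argue that a realizable \matousek{}-type USO is uniquely determined by a pair consisting of a branching $B$ on vertex set $[n]$ together with a sink location $s\in\{0,1\}^n$. By \Cref{lem:realizablegraphs}, every realizable \matousek{}-type USO has a dimension influence graph $G$ which is the reflexive transitive closure of some branching, and given $G$ one recovers the unique underlying branching by taking the transitive reduction (after removing the self-loops): in a branching each non-root vertex has a unique parent, so the ``parent'' edges are exactly the edges of $G$ that are not implied by transitivity of the remaining edges. Hence the map $(B,s)\mapsto o$ is injective, and by \Cref{lem:realizablegraphs} it is also surjective onto the realizable \matousek{}-type USOs.

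Next, I would count the pairs. The number of choices for $s$ is $2^n$. The number of branchings on $n$ labelled vertices (i.e.\ rooted forests) is classically $(n+1)^{n-1}$; one standard way to see this is the bijection with labelled trees on $n+1$ vertices obtained by adjoining an extra root $0$ and connecting it to every root of the forest, combined with Cayley's formula. Multiplying gives exactly $2^n\cdot (n+1)^{n-1}$ realizable \matousek{}-type USOs.

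Finally, for the asymptotic statement, I would take logarithms: $\log_2\!\bigl(2^n(n+1)^{n-1}\bigr)=n+(n-1)\log_2(n+1)=\Theta(n\log n)$, so the count is $2^{\Theta(n\log n)}$. The only non-routine step is the injectivity/uniqueness of the branching recovered from its reflexive transitive closure; everything else is a direct application of \Cref{lem:realizablegraphs} and the known count of labelled rooted forests.
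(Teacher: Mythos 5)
Your proposal is correct and follows essentially the same route as the paper: count pairs of (branching, sink location), using Cayley's formula for the $(n+1)^{n-1}$ labelled rooted forests/branchings and the fact that distinct branchings have distinct reflexive transitive closures. Your extra remarks on recovering the branching via transitive reduction and on taking logarithms just make explicit steps the paper leaves implicit.
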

\begin{proof}
By Cayley's formula~\cite{cayley1889numberofarborescences}, there are $(n+1)^{n-1}$ labelled rooted forests on $n$ vertices. By directing all edges away from the roots, there is a bijection from the set of labelled rooted forests to the set of labelled branchings. All branchings have distinct reflexive transitive closures, thus there are $(n+1)^{n-1}$ valid dimension influence graphs for realizable \matousek{}-type USOs.
\end{proof}

As each vertex evaluation only provides $n$ bits of information, we get a lower bound on the number of vertex evaluations required to \emph{distinguish} \matousek{}-type USOs, but sink-finding can of course be easier.
\begin{corollary}\label{cor:informationlowerbound}
It takes at least $\Omega(n)$ vertex evaluations to distinguish all \matousek{}-type USOs, and at least $\Omega(\log n)$ vertex evaluations to distinguish all realizable \matousek{}-type USOs.
\end{corollary}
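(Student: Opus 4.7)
The plan is to prove the corollary by a standard information-theoretic counting argument, leveraging the two lemmas immediately preceding it. The key observation is that each vertex evaluation query returns an outmap in $\{0,1\}^n$, so a single query yields at most $n$ bits of information about which USO we are facing.

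First, I would formalize what ``distinguishing'' means in terms of a decision tree. A deterministic algorithm that distinguishes a class $\mathcal{C}$ of USOs using at most $k$ queries in the worst case induces a mapping from $\mathcal{C}$ to sequences of query responses of length at most $k$. Since the algorithm's next query depends only on previous responses, two USOs on which the algorithm produces the same transcript of $k$ responses cannot be distinguished. The number of possible transcripts is at most $(2^n)^k = 2^{nk}$, so to distinguish all elements of $\mathcal{C}$ we must have $2^{nk} \geq |\mathcal{C}|$, i.e., $k \geq \log_2 |\mathcal{C}|/n$.

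Plugging in the two counts established just above: for the class of all \matousek{}-type USOs we have $|\mathcal{C}| = 2^{\Theta(n^2)}$, yielding $k \geq \Theta(n^2)/n = \Omega(n)$; for the realizable \matousek{}-type USOs we have $|\mathcal{C}| = 2^{\Theta(n\log n)}$, yielding $k \geq \Theta(n\log n)/n = \Omega(\log n)$. Both desired bounds follow.

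There is essentially no obstacle here, since the two counting lemmas do all the heavy lifting. The only subtle point to be careful about is that the corollary speaks of distinguishing USOs, which is a strictly easier task than finding the sink: distinct USOs may share the same sink, so the number of queries required to find the sink could in principle be smaller than what this information-theoretic bound suggests. The corollary (and its proof) should make this distinction explicit, which is exactly the reason the surrounding text flags these as lower bounds on distinguishing rather than on sink-finding, motivating the need for the stronger lower bound of \Cref{thm:lowerbound} that will be proved later.
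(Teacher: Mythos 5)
Your proof is correct and matches the paper's (implicit) argument exactly: both rely on the fact that each vertex evaluation yields at most $n$ bits and then divide the $\log_2$ of the counts from the two preceding lemmas by $n$. You also correctly flag the same caveat the paper makes, namely that this bounds the distinguishing task rather than sink-finding.
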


\subsection{Equivalence of Sink-Finding and Solving \texorpdfstring{$Mx=y$}{Mx=y}}
By simply rearranging some terms in \Cref{def:matousekuso}, we get the following observation.
\begin{observation}
\label{obs:equivOfQueries}
An orientation $u$ is a \matousek{}-type USO with dimension influence graph $G$ given by the adjacency matrix $M$ if and only if it fulfills
\[\forall x,y\in \{0,1\}^n: u(x)\xor u(y) = M(x\xor y).\]
\end{observation}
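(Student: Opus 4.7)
The plan is to prove both directions of the equivalence directly from \Cref{def:matousekuso}, using the fact that the adjacency matrix $M$ of a dimension influence graph is always invertible (since, up to a common reordering of rows and columns via $P$, it is the upper-triangular matrix $A$ with $1$'s on the diagonal).

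For the forward direction, I assume $u$ is a \matousek{}-type USO with dimension influence graph $G$ given by $M$. By \Cref{def:matousekuso}, there exists some $s\in\{0,1\}^n$ with $u(v)=M(v\xor s)$ for all $v$. Then for arbitrary $x,y\in\{0,1\}^n$, linearity of $M$ over $GF(2)$ gives
\[u(x)\xor u(y)=M(x\xor s)\xor M(y\xor s)=M\bigl((x\xor s)\xor(y\xor s)\bigr)=M(x\xor y),\]
which is the desired identity.

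For the backward direction, I assume the equation $u(x)\xor u(y)=M(x\xor y)$ holds for all $x,y$, where $M$ is the adjacency matrix of a fixed dimension influence graph $G$. Specializing to $y=\zerovec$ gives $u(x)=u(\zerovec)\xor Mx$ for every $x$. Since $M=PAP^T$ with $A$ invertible and upper-triangular, $M$ itself is invertible over $GF(2)$, so I can define $s:=M^{-1}u(\zerovec)$. Substituting back yields $u(x)=Ms\xor Mx=M(x\xor s)$, which matches the form required by \Cref{def:matousekuso}. As a side check, this $u$ is a legal orientation in the sense of \Cref{def:matousekuso} because $u(v)_i\xor u(v\xor e_i)_i=(Me_i)_i=M_{i,i}=1$, the diagonal being all ones thanks to the reflexive loops of $G$.

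There is no real obstacle here: the statement is essentially a restatement of \Cref{def:matousekuso} after absorbing the shift by $s$ into a single reference evaluation. The only subtle point worth flagging in the write-up is the invertibility of $M$, which is what allows us to recover $s$ uniquely from $u(\zerovec)$ in the backward direction.
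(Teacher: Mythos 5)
Your proof is correct and follows essentially the same route as the paper, which justifies the observation simply by rearranging the terms of \Cref{def:matousekuso}; you merely spell out both directions explicitly, including the (valid) point that invertibility of $M=PAP^T$ lets you recover $s=M^{-1}u(\zerovec)$ in the backward direction.
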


This looks quite innocent, and maybe not too useful, but we can use this equation to formulate an algebraic problem which is equivalent to sink-finding in \matousek{}-type USOs.

\begin{definition}[$Mx=y$ Problem]
For a matrix $M\in \{0,1\}^{n\times n}$ and a vector $y\in \{0,1\}^n$, the associated $Mx=y$ problem is to find the vector $x\in\{0,1\}^n$ fulfilling $Mx=y$. $M$ is not explicitly included as part of the problem instance, but only an oracle is provided to the algorithm. The oracle answers \emph{matrix-vector queries}: for any query $q\in\{0,1\}^n$, it returns $Mq$.
\end{definition}

Using \Cref{obs:equivOfQueries}, we can now show that up to a single additional query, the query complexities of the sink-finding problem and the $Mx=y$ problem are the same. We formalize this in the following theorem.

\begin{theorem}\label{thm:equivOfAlgorithms}
There exists a deterministic algorithm $\mathcal{A}$ to find the sink in a subclass $\mathcal{U}$ of \matousek{}-type USOs closed under reorientations in $f(n)$ vertex evaluations if and only if there exists a deterministic algorithm $\mathcal{B}$ to find $x$ fulfilling $Mx=y$ where $M$ can be the adjacency matrix of the dimension influence graph of any USO~$u\in\mathcal{U}$ in $f(n)-1$ matrix-vector queries.
\end{theorem}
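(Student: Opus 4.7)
The plan is to exploit Observation~\ref{obs:equivOfQueries}, which gives $u(v)\xor u(v')=M(v\xor v')$ for any two vertices of a \matousek{}-type USO $u$ with adjacency matrix $M$. Thus one vertex evaluation provides a ``reference point'' from which any subsequent vertex evaluation computes a matrix-vector product; a vertex evaluation carries the same information as a matrix-vector query once the value $u(\zerovec)=Ms$ (with $s$ the sink) is known. Each direction of the equivalence then amounts to simulating one oracle with the other, paying or saving a single query due to this reference point.

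For the ($\Leftarrow$) direction, given $\mathcal{B}$ solving $Mx=y$ in $f(n)-1$ matrix-vector queries, I build $\mathcal{A}$ as follows. First, $\mathcal{A}$ queries the vertex $\zerovec$, obtaining $y:=u(\zerovec)=Ms$ in one vertex evaluation. It then runs $\mathcal{B}$ with this $y$, answering each matrix-vector query $q$ of $\mathcal{B}$ by performing a vertex evaluation at $q$ and returning $u(q)\xor y=Mq$. When $\mathcal{B}$ returns the unique solution $x$ to $Mx=y$ (unique since $M$ is invertible, being a conjugation of a unit upper-triangular matrix), $\mathcal{A}$ outputs it as the sink, since $Mx=y=Ms$ forces $x=s$. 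The total cost is $1+(f(n)-1)=f(n)$ vertex evaluations.

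For the ($\Rightarrow$) direction, given $\mathcal{A}$ that finds the sink in $f(n)$ vertex evaluations, I build $\mathcal{B}$ by simulating $\mathcal{A}$ on the virtual orientation $u(v):=y\xor M(v\xor v_1)$, where $v_1$ is $\mathcal{A}$'s deterministic first query. A quick rewrite gives $u(v)=M(v\xor v_1\xor x)$ with $Mx=y$, so $u$ is a \matousek{}-type USO with adjacency matrix $M$ and sink $v_1\xor x$. $\mathcal{B}$ answers $\mathcal{A}$'s first query with $y$ at no oracle cost (since $u(v_1)=y$); for each subsequent query $v_i$, $\mathcal{B}$ issues the matrix-vector query $v_i\xor v_1$ and returns $y\xor M(v_i\xor v_1)$. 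When $\mathcal{A}$ returns a sink $s'$, $\mathcal{B}$ outputs $x:=s'\xor v_1$. The total cost is at most $f(n)-1$ matrix-vector queries.

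The main obstacle is ensuring that the virtual USO in the ($\Rightarrow$) direction actually lies in $\mathcal{U}$, so that $\mathcal{A}$'s correctness guarantee applies. By assumption $M$ is the dimension influence matrix of some $u_0\in\mathcal{U}$, and $\mathcal{U}$ is closed under reorientations; on \matousek{}-type USOs such reorientations act by shifting the sink while preserving the adjacency matrix $M$. Hence every \matousek{}-type USO with adjacency matrix $M$ (in particular the virtual $u$ with sink $v_1\xor x$) lies in $\mathcal{U}$, so $\mathcal{A}$'s guarantees apply and $\mathcal{B}$ correctly recovers~$x$.
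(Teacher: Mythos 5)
Your proposal is correct and follows essentially the same route as the paper's proof: query a reference vertex first to simulate the matrix-vector oracle via $u(q)\xor u(\zerovec)=Mq$ in one direction, and in the other direction simulate the vertex-evaluation oracle on the virtual orientation $u(v)=y\xor M(v\xor v_1)$, answering the first query for free and invoking closure under reorientations to place this USO in $\mathcal{U}$. The only cosmetic differences (arguing $x=s$ via invertibility of $M$ rather than computing $u(x)=\zerovec$ directly) do not change the argument.
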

\begin{proof}
We first prove the ``if'' direction. Given an algorithm $\mathcal{B}$ for solving $Mx=y$ in $f(n)-1$ queries we construct a sink-finding algorithm $\mathcal{A}$. The algorithm $\mathcal{A}$ first chooses an arbitrary vertex of the USO $u$, say $\zerovec$, and queries it to receive its outmap $u(\zerovec)$. It sets $y:=u(\zerovec)$ and sets $M$ to the (still unknown) adjacency matrix of the dimension influence graph of $u$. If $\mathcal{B}$ makes a query $q\in\{0,1\}^n$, $\mathcal{A}$ can simulate the matrix-vector oracle and answer this query by querying the vertex $q$ in the USO vertex evaluation oracle. The reply $Mq$ can be computed as $u(\zerovec)\xor u(q)$ by \Cref{obs:equivOfQueries}. Once $\mathcal{B}$ has found $x$ fulfilling $Mx=y$ (in at most $f(n)-1$ queries by assumption), $\mathcal{A}$ knows that $x$ must be the sink, as 
\[y=Mx=u(\zerovec)\xor u(x)=y\xor u(x) \text{ and thus } u(x)=\zerovec.\]
For each query of $\mathcal{B}$, $\mathcal{A}$ had to perform one query, with an additional query at the beginning to determine $y=u(\zerovec)$. In conclusion, $\mathcal{A}$ required at most $f(n)-1+1=f(n)$ queries.

Next, we prove the ``only if'' direction. Given an algorithm $\mathcal{A}$ for finding the sink in a USO $u$ from $\mathcal{U}$ in $f(n)$ queries, we construct an algorithm $\mathcal{B}$ to solve $Mx=y$. When $\mathcal{A}$ makes its first query, vertex $v_0$, $\mathcal{B}$ simulates the vertex evaluation oracle and answers with $u(v_0):=y$ from its given instance. Whenever $\mathcal{A}$ makes a query $v$, $\mathcal{B}$ computes the outmap $u(v)$ using
\[u(v):=u(v_0)\xor (M (v\xor v_0))=y\xor (M (v\xor v_0)).\]
By \Cref{obs:equivOfQueries}, $u$ is thus a \matousek{}-type USO with dimension influence graph with adjacency matrix $M$ and with $u(v_0)=y$. As $M$ is the dimension influence graph of a USO in $\mathcal{U}$, and $\mathcal{U}$ is closed under reorientations, $u$ must be in $\mathcal{U}$. Once $\mathcal{A}$ has found the sink $s$ (in at most $f(n)$ queries by assumption) $\mathcal{B}$ can compute the solution $x:=s\xor v_0$ to the system $Mx=y$, as
\[M(s\xor v_0)\overset{\text{Obs.\ref{obs:equivOfQueries}}}{=}u(s)\xor u(v_0)=\zerovec\xor u(v_0)=y.\]
For each query performed by $\mathcal{A}$ apart from the first one, $\mathcal{B}$ had to perform exactly one query. In conclusion, $\mathcal{B}$ required at most $f(n)-1$ queries.
\end{proof}

As the mapping $x\mapsto Mx$ computed by the oracle is linear and invertible, we can make the following two observations about any algorithm solving the $Mx=y$ problem.
\begin{observation}\label{obs:independentqueries}
If $x'$ is a linear combination of previously asked queries $x^{(1)},\ldots,x^{(k)}$, the response $Mx'$ can be computed without querying the oracle. Thus, every optimal algorithm only emits linearly independent queries.
\end{observation}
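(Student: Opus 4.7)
The plan is to exploit the $GF(2)$-linearity of the oracle map $q\mapsto Mq$. First I would observe that for any coefficients $c_1,\ldots,c_k\in GF(2)$ and any vectors $x^{(1)},\ldots,x^{(k)}\in\{0,1\}^n$, linearity of matrix multiplication gives
\[
M\Bigl(\bigoplus_{i=1}^{k} c_i\,x^{(i)}\Bigr) \;=\; \bigoplus_{i=1}^{k} c_i\,\bigl(Mx^{(i)}\bigr).
\]
Hence, if the algorithm has already obtained the responses $Mx^{(1)},\ldots,Mx^{(k)}$ from the oracle, and if a new vector $x'$ is expressible as $x'=\bigoplus_{i=1}^{k} c_i x^{(i)}$ for some known coefficients $c_i$, then the algorithm can compute $Mx'$ locally (via an unbounded amount of internal computation, which our model permits) without issuing any further query. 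This already proves the first sentence of the observation.

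For the second sentence, I would argue by a straightforward exchange argument. Suppose some algorithm $\mathcal{B}$ makes, at some point, a query $x'$ that is a linear combination of its previously issued queries $x^{(1)},\ldots,x^{(k)}$. Since $\mathcal{B}$ is deterministic, the coefficients $c_i$ witnessing this linear dependence are determined by the responses seen so far, and can be computed by $\mathcal{B}$ itself (for instance by Gaussian elimination on the $x^{(i)}$). Modify $\mathcal{B}$ into $\mathcal{B}'$ which, at this step, skips the query to the oracle and instead uses the value $\bigoplus_i c_i\,(Mx^{(i)})$ in place of the oracle's answer; by the linearity identity above this is exactly what the oracle would have returned. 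The subsequent behaviour of $\mathcal{B}'$ is identical to that of $\mathcal{B}$, so $\mathcal{B}'$ outputs the same solution while using one fewer query. Iterating this replacement eliminates every linearly dependent query, yielding a strictly better algorithm whenever $\mathcal{B}$ was not already restricted to linearly independent queries.

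There is no real obstacle here: the only conceptual point is that determinism ensures the dependence among past queries is computable by the algorithm at the moment of the redundant query, so the simulation is carried out entirely inside the algorithm's unlimited post-oracle computation. Together, these two steps establish the observation.
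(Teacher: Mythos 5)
Your proposal is correct and follows exactly the reasoning the paper intends: the paper justifies this observation solely by the $GF(2)$-linearity of the oracle map $q\mapsto Mq$, which is precisely your first identity, and your exchange argument for the second sentence is the standard (implicit) simulation the paper relies on. Nothing is missing or different in substance.
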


\begin{observation}\label{obs:span}
If $y$ is a linear combination of previously given replies $y^{(1)},\ldots,y^{(k)}$, i.e., if $y\in\Span (y^{(k)},\ldots,y^{(k)})$, the algorithm can find the solution $x$ with no additional queries.
\end{observation}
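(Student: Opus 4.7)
The plan is to exploit the linearity of the oracle map $q \mapsto Mq$ over $GF(2)$ together with the fact that $M$ is invertible. First I would unpack the hypothesis: $y \in \Span(y^{(1)}, \ldots, y^{(k)})$ means there is some subset $S \subseteq [k]$ with
\[
y \;=\; \bigoplus_{i \in S} y^{(i)}.
\]
For each $i$, the oracle definition gives $y^{(i)} = M x^{(i)}$, where $x^{(i)}$ is the $i$-th query, a vector that the algorithm has recorded in its own state (so no oracle access is needed to recover it).

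Next I would apply the $GF(2)$-linearity of matrix-vector multiplication to combine these equalities. Setting
\[
x \;:=\; \bigoplus_{i \in S} x^{(i)},
\]
we have $Mx = \bigoplus_{i \in S} M x^{(i)} = \bigoplus_{i \in S} y^{(i)} = y$. Thus $x$ is a solution to $Mx = y$, and it was produced using only the algorithm's internal record of past queries and replies, with no further call to the oracle.

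Finally, to argue that this $x$ is \emph{the} solution (and not just \emph{a} solution), I would invoke the structure of $M$ from \Cref{def:matousekuso}: $M = P A P^T$ with $A$ an invertible upper-triangular matrix and $P$ a permutation, hence $M$ is invertible over $GF(2)$. Therefore $Mx = y$ has a unique solution, which must be the $x$ exhibited above. There is no real obstacle here; the observation is essentially a restatement of linearity plus invertibility, and the only point worth spelling out carefully is that finding the subset $S$ is a pure linear-algebra computation on vectors the algorithm already possesses, so it falls under the paper's convention of allowing unbounded computation between queries.
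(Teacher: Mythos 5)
Your proof is correct and matches the paper's (implicit) argument: the paper states this observation as an immediate consequence of the oracle map $x\mapsto Mx$ being linear and invertible, which is exactly the linearity-plus-invertibility reasoning you spell out. The only added detail in your write-up, identifying the combining subset $S$ by offline linear algebra on recorded queries, is a fine and harmless elaboration under the paper's unbounded-computation convention.
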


In some places it will be useful to interpret the matrix-vector queries in terms of vertex sets of the dimension influence graph.

\begin{observation}\label{obs:graphQueries}
For a query $q\in\{0,1\}^n$, the vertex set $\{i\in [n]:(Mq)_i=1\}$ contains exactly the vertices which have an odd number of in-neighbors among the vertex set $\{i\in [n] : q_i=1\}$ in $G$.
\end{observation}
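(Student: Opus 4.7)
The plan is to unpack the matrix-vector product $Mq$ coordinate by coordinate and then translate the arithmetic in $GF(2)$ into the graph-theoretic language of in-neighbors. This is essentially a direct computation: the observation is really just restating the definition of the adjacency matrix $M$ together with the definition of matrix-vector multiplication modulo $2$, so no clever idea is required.

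Concretely, I would first fix an index $i\in[n]$ and write out
\[(Mq)_i=\bigoplus_{j\in [n]}M_{i,j}\bitwiseand q_j.\]
Next, I would recall from the paragraph preceding \Cref{def:matousekuso} (where the dimension influence graph is introduced) that $M_{i,j}=1$ is equivalent to $(j,i)\in E_M$, i.e., to $j$ being an in-neighbor of $i$ in $G$. Substituting this into the sum above shows that $(Mq)_i$ equals the parity of the number of indices $j$ which simultaneously satisfy $(j,i)\in E_M$ and $q_j=1$; that is, the parity of the number of in-neighbors of $i$ lying in the set $Q:=\{j\in[n]:q_j=1\}$.

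From this, $(Mq)_i=1$ if and only if $i$ has an odd number of in-neighbors in $Q$, which is exactly the claim. The entire argument is a one-line unpacking, so there is no real obstacle; the observation is stated separately only to give a convenient graph-theoretic handle on the oracle answers for later use.
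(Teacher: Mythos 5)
Your proof is correct and is exactly the direct unpacking one would expect; the paper itself states this as an unproved observation, and your argument supplies the one-line verification it implicitly relies on (including, correctly, the contribution of the loop $(i,i)$ when $q_i=1$, since $M_{i,i}=1$ always).
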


\section{The General Case}\label{sec:generalcase}
It is not difficult to derive linear-time algorithms to find the sink in \matousek{}-type USOs.
\begin{theorem}\label{thm:jumpantipodal}
There exists an algorithm that finds the sink of an $n$-dimensional \matousek{}-type USO in $n$ vertex evaluations.
\end{theorem}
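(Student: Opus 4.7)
The plan is to analyze the natural \emph{jump antipodal} algorithm: starting from an arbitrary vertex $v_0 := \zerovec$, after each query move to $v_{i+1} := v_i \xor u(v_i)$, which is the antipode of $v_i$ inside the minimal face spanned by the outgoing edges at $v_i$. The algorithm queries $v_0, v_1, \ldots$ in sequence, stopping early if some $u(v_i) = \zerovec$ (in which case $v_i$ is the sink). I claim it never performs more than $n$ queries, and that after querying $v_0, \ldots, v_{n-1}$ it may simply output $v_n = v_{n-1} \xor u(v_{n-1})$ without querying it.

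Correctness reduces to a short telescoping identity. Writing $N := M \xor I$ --- which has zero diagonal because every diagonal entry of $M$ equals $1$ by \Cref{def:matousekuso} --- the defining equation $u(v) = M(v \xor s)$ gives
\[v_{i+1} \xor s \;=\; v_i \xor M(v_i \xor s) \xor s \;=\; (v_i \xor s) \xor (I \xor N)(v_i \xor s) \;=\; N(v_i \xor s).\]
A straightforward induction then yields $v_i \xor s = N^i(v_0 \xor s) = N^i s$ for every $i \ge 0$.

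The main step --- really the only nontrivial observation --- is that $N^n = 0$. By \Cref{def:matousekuso}, $M = PAP^T$ with $A$ upper triangular and $A_{i,i} = 1$, so $A = I \xor A'$ for some strictly upper-triangular $A'$, and hence $N = PA'P^T$. Because conjugation by the permutation matrix $P$ (whose inverse is $P^T$) preserves the nilpotency index, $N^n = P(A')^n P^{-1} = 0$, using that every $n\times n$ strictly upper-triangular matrix is nilpotent of index at most $n$. Consequently $v_n \xor s = N^n s = \zerovec$, that is, $v_n = s$, so the algorithm returns the correct sink after at most $n$ vertex evaluations. The only conceptual hurdle is spotting the recurrence $v_{i+1} \xor s = N(v_i \xor s)$; once it is in hand, the nilpotency of $N$ immediately upper-bounds the length of the run by~$n$.
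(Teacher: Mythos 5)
Your proposal is correct, and it runs the same algorithm as the paper (\textsc{JumpAntipodal}, jumping from $v$ to $v\xor u(v)$ and outputting the $n$-th iterate without querying it), but the correctness argument is genuinely different. The paper argues combinatorially: a \matousek{}-type USO is decomposable, hence combed in some dimension, so the first jump lands in a facet that is never left again and is itself decomposable; recursing gives at most $n$ jumps. You instead argue algebraically from \Cref{def:matousekuso}: writing $N:=M\xor I=PA'P^T$ with $A'$ strictly upper-triangular, the recurrence $v_{i+1}\xor s=N(v_i\xor s)$ telescopes to $v_i\xor s=N^i(v_0\xor s)$, and nilpotency $N^n=0$ forces $v_n=s$. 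Both are sound; the trade-off is scope versus precision. The paper's argument applies verbatim to \emph{all} decomposable USOs (a point it exploits in the remark right after the proof), whereas your nilpotency argument is tied to the specific algebraic form $u(v)=M(v\xor s)$ of \matousek{}-type USOs. In exchange, your argument gives an explicit closed form for the trajectory and actually bounds the number of jumps by the nilpotency index of $N$, i.e., by one plus the length of the longest path in the dimension influence graph, which can be much smaller than $n$ on particular instances; the combinatorial argument only yields the uniform bound $n$.
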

\begin{proof}
The \textsc{JumpAntipodal} algorithm begins at some arbitrary vertex $v:=v_0$. In each step, it queries~$v$, then jumps to the vertex $v\xor o(v)$. As \matousek{}-type USOs are decomposable, the USO is combed in some dimension $d$. After the first jump, the algorithm arrives at a vertex $v'$ in the facet towards which all edges of dimension $d$ are pointed. This facet can never be left again, and it is itself a decomposable USO of dimension~$n-1$. Applying this argument recursively, after at most~$n$ jumps \textsc{JumpAntipodal} has reached a $0$-dimensional USO --- a single vertex that must be the sink. It does not need to query this vertex anymore, and thus requires at most $n$ vertex evaluations.
\end{proof}

As can be seen from this proof, \textsc{JumpAntipodal} requires only $n$ vertex evaluations on all decomposable USOs, not only on the \matousek{}-type USOs. In contrast, it has been shown that even on some acyclic USOs it requires exponentially many queries~\cite{schurr2004phd}. Note that \textsc{JumpAntipodal} only finds the sink. If we are interested in recovering the whole structure of a \matousek{}-type USO, this can be easily achieved in $n$ matrix-vector queries, or $n+1$ vertex evaluations.

We are now going to prove \Cref{thm:lowerbound}, the matching lower bound to \Cref{thm:jumpantipodal}. The proof works in the framework of the $Mx=y$ problem, and is quite algebraic in nature.
\Cref{alg:adversarial} shows a strategy for an adversary to adaptively construct the matrix $M$ in a way to force every deterministic algorithm to use at least $n-1$ queries to find $x$. In terms of the dimension influence graph, this strategy can be seen as picking a sink $j$ in every iteration, and adding/removing some edges towards $j$. This ensures that the graph represented by $M$ always remains acyclic apart from the loops at each vertex.

\begin{algorithm}[h!]
\caption{Adversarial Construction}\label{alg:adversarial}
\begin{algorithmic}[1]
\State $M^{(0)}\gets I$
\For{$k\in\{1,\ldots,n-1\}$}
    \State $x^{(k)}\gets$ new linearly independent query from algorithm
    \If{$y\in\Span (M^{(k-1)}x^{(1)},\ldots,M^{(k-1)}x^{(k)})$}
        \State $X\gets\begin{pmatrix}x^{(1)} & \cdots & x^{(k)}
        \end{pmatrix}^T$
        \State $freevars\gets $ free variables of linear system of equations determined by $X$
        \State Pick $z^{(k)}$ such that $X z^{(k)}=e_k$ and $z^{(k)}_i=0$ for all $i\in freevars$
        \State Pick $j\in freevars$ such that $e_j$ is an eigenvector of $M^{(k-1)}$ \Comment{$j$ is a sink}
        \State $M^{(k)}\gets M^{(k-1)} + e_j {z^{(k)}}^T$ \Comment{Add $z^{(k)}$ to $j$-th row of $M^{(k-1)}$}
    \Else
        \State $M^{(k)}\gets M^{(k-1)}$ \Comment{No need to change $M$}
    \EndIf
    \State Answer query with $y^{(k)}:=M^{(k)}x^{(k)}$
\EndFor
\end{algorithmic}
\end{algorithm}

\lowerbound*
\begin{proof}
We show that the adversarial construction in \Cref{alg:adversarial} ensures that no algorithm can find the solution to the $Mx=y$ problem in fewer than $n-1$ queries. To prove this, we first show four auxiliary properties of the instances constructed by \Cref{alg:adversarial}: \begin{itemize}
    \item Feasibility: We can always pick $z^{(k)}$ and $j$ on lines~7 and~8 as defined.
    \item Consistency: The replies to previous queries remain consistent.
    \item Legality: The graph defined by $M$ remains a legal dimension influence graph, i.e., it is acyclic with added loops at every vertex.
    \item Uncertainty: After $k<n-1$ queries, the algorithm cannot yet distinguish between some instances with different solutions.
\end{itemize}

\textbf{Feasibility:} By \Cref{obs:independentqueries}, we can assume all queries $x^{(i)}$ to be linearly independent. Thus, the $k\times n$-dimensional matrix $X$ has rank $k<n$ and is underdetermined. We can thus set all free variables of $z^{(k)}$ given $Xz^{(k)}=e_k$ to be zero, and get a unique $z^{(k)}$ to be picked at line~7.

Whenever an additional linearly independent row is added to $X$, exactly one variable is removed from the set of free variables. Thus, after $k$ queries $n-k$ free variables remain. These were also free variables in all previous iterations, and therefore all vectors $z^{(k')}$ of iterations $k'\leq k$ have a $0$ at these coordinates. Therefore, for any variable $j\in freevars$, it holds that the column $j$ of $M^{(k-1)}$ must be equal to $e_j$, and thus $e_j$ must be an eigenvector of $M^{(k-1)}$. We conclude that any $j\in freevars$ can be picked on line~8.

\textbf{Consistency:} We prove that the possible change to $M^{(k-1)}$ in iteration $k$ has no effect on any query $x^{(k')}$ for $k'<k$. Note that if $M$ is changed in iteration $k$, we have $M^{(k)}=M^{(k-1)}+e_j {z^{(k)}}^T$, and thus $M^{(k)}x^{(k')}=M^{(k-1)}x^{(k')}+e_j {z^{(k)}}^T x^{(k')}$. As $z^{(k)}$ was picked such that $X z^{(k)}=e_k$, we have in particular ${x^{(k')}}^T z^{(k)}=0$, and thus
\[M^{(k)}x^{(k')}=M^{(k-1)}x^{(k')}+e_j {z^{(k)}}^T x^{(k')} = M^{(k-1)}x^{(k')}+e_j 0=M^{(k-1)}x^{(k')}.\]

\textbf{Legality:} As we start with $M^{(0)}=I$, we start with a loop at every vertex. As $z^{(k)}$ is added to the $j$-th row, and $z^{(k)}_j=0$, these loops are never removed. As $j$ is picked such that $e_j$ is an eigenvector of $M^{(k-1)}$, the $j$-th column of $M^{(k-1)}$ must be equal to $e_j$. This corresponds to the vertex~$j$ having no outgoing edges apart from the loop, i.e., $j$ is a sink. Changing the $j$-th row of~$M^{(k-1)}$ only adds or removes edges pointing \emph{towards} $j$. As $j$ is a sink, this cannot introduce any cycles. The graph described by $M^{(k)}$ thus remains a legal dimension influence graph.

\textbf{Uncertainty:} 
We first show that after each iteration, the algorithm cannot deduce the solution through linear combination, i.e.,
\begin{equation}
    \forall 0\leq k \leq n-1:\;\;y\not\in\Span(M^{(k)}x^{(1)},\ldots, M^{(k)}x^{(k)}).\label{eqn:desired}
\end{equation}
We prove this by induction. For $k=0$, the statement is trivially true. Assuming it holds for some~$k-1<n-1$, we show that it also holds for $k$. If in the $k$-th iteration the condition at line~4 is false, the statement also trivially follows. Otherwise, we must have
\begin{align}
y&\not\in\Span (M^{(k-1)}x^{(1)},\ldots, M^{(k-1)}x^{(k-1)}), \text{ but}\label{eqn:uncertainty1} \\
y&\in\Span (M^{(k-1)}x^{(1)},\ldots, M^{(k-1)}x^{(k-1)}, M^{(k-1)}x^{(k)}).\label{eqn:uncertainty2} \\
\intertext{Since $j$ is picked as a free variable of $X$, $e_j\not\in\Span (x^{(1)},\ldots, x^{(k)})$, and as $e_j$ is an eigenvector of the (invertible) $M^{(k-1)}$, it must also hold that}
    e_j&\not\in \Span (M^{(k-1)}x^{(1)},\ldots, M^{(k-1)}x^{(k)}).\label{eqn:uncertainty3}\\
\intertext{
\Cref{eqn:uncertainty1,eqn:uncertainty2} show that $M^{(k-1)}x^{(k)}$ is a required element in the linear combination of~$y$. \Cref{eqn:uncertainty3} shows that $e_j$ cannot be expressed as a linear combination of the $M^{(k-1)}x^{(k')}$. Therefore, if we add $e_j$ to the required element $M^{(k-1)}x^{(k)}$, $y$ can not be in the span anymore, i.e.,}
    y&\not\in\Span (M^{(k-1)}x^{(k)}+e_j,M^{(k-1)}x^{(1)},\ldots,M^{(k-1)}x^{(k-1)}).
\end{align}
This is equivalent to the desired \Cref{eqn:desired} for $k$, as $M^{(k)}x^{(k)}=M^{(k-1)}x^{(k)}+e_j$, and as shown in paragraph ``Consistency'', $M^{(k)}x^{(k')}=M^{(k-1)}x^{(k')}$ for all $k'<k$.

We can now show that after $k<n-1$ queries, there exist two matrices which are both consistent with the given replies but have different solutions. The first such matrix is $M^{(k)}$. The second matrix is the matrix $M^{(k+1)}$ constructed by the adversary if it would be given the solution for $M^{(k)}$ as an additional linearly independent query~$x^{(k+1)}:={M^{(k)}}^{-1}y$. As proven in previous paragraphs,~$M^{(k+1)}$ is legal and consistent with $M^{(k)}$ on all queries $x^{(1)},\ldots,x^{(k)}$. \Cref{eqn:desired} implies that $x^{(k+1)}$ is not the solution to $M^{(k+1)}$, proving that the two indistinguishable matrices have different solutions.

\textbf{Conclusion:}
Given any $n-1$ queries, \Cref{alg:adversarial} produces a series of legal matrices $M^{(k)}$ (Feasibility + Legality) which are always consistent with the previously given replies (Consistency). The algorithm cannot know the solution in fewer than $n-1$ queries, as it cannot distinguish between matrices with different solutions (Uncertainty). By \Cref{thm:equivOfAlgorithms}, we conclude that no algorithm can find the sink of an $n$-dimensional \matousek{}-type USO in fewer than $n$ vertex evaluations in the worst case.
\end{proof}

\section{The Realizable Case}\label{sec:realizablecase}
In this section we prove our second main result, the upper bound for the realizable case.
\sublinear*

To prove \Cref{thm:sublinear}, we provide a concrete algorithm in the matrix-vector query model to recover the matrix~$M$. The algorithm makes heavy use of the structure of the graph $G$ described by this matrix, which has to be the reflexive transitive closure of a branching, as we are only dealing with realizable \matousek{}-type USOs (recall \Cref{lem:realizablegraphs}). Recall that we can view the matrix-vector queries also as sets of vertices of the dimension influence graph (\Cref{obs:graphQueries}). In a slight abuse of notation, we will sometimes use the name $v$ of a vector $v\in\{0,1\}^n$ to also denote the set~$\{i\in[n]:v_i=1\}$.

We first take a closer look at the structure of $G$.
The underlying branching (technically, the unique reflexive transitive reduction of $G$) can be decomposed into levels, where the roots are on level $0$, and the children of a vertex on level $\ell$ are on level $\ell+1$. In the reflexive transitive closure $G$, we can see that the in-degree of each vertex is equal to its level plus $1$ (due to the loops).

\begin{observation}\label{obs:levelsanddegrees}
A vertex $v$ on level $\ell$ has exactly $\ell+1$ incoming edges, and $v$ has exactly one in-neighbor on each level $\ell'\in\{0,\ldots, \ell\}$.
\end{observation}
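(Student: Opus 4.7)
The plan is to unwind both claims directly from the definitions, using \Cref{lem:realizablegraphs} to reduce to the structure of the underlying branching. Let $B$ be the reflexive transitive reduction of $G$, so that $B$ is a branching and $G$ is obtained from $B$ by adding all edges $(u,v)$ for which there is a nontrivial directed path from $u$ to $v$ in $B$, together with the loops $(v,v)$. The levels in the statement are the depths in $B$, so level $0$ consists of the roots and level $\ell+1$ of the children of level-$\ell$ vertices.

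First I would describe the in-neighbors of $v$ in $G$. Because edges of $B$ point away from the roots, any non-loop edge $(u,v)$ in $G$ must correspond to a directed path from $u$ to $v$ in $B$, which means $u$ is a proper ancestor of $v$ in $B$. Conversely, every proper ancestor of $v$ contributes an edge into $v$ in $G$ by transitivity. Together with the loop, this gives
\[
\{u : (u,v) \in E_G\} \;=\; \{v\} \cup \{\text{proper ancestors of } v \text{ in } B\}.
\]

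Next I would count these ancestors level by level. Since $B$ is a branching, every non-root vertex has a unique parent, so by induction on $\ell$ a vertex on level $\ell$ has exactly one ancestor on each level $\ell'\in\{0,\ldots,\ell-1\}$: its parent lies on level $\ell-1$, and the ancestors of the parent give, by the inductive hypothesis, exactly one ancestor on each lower level. Combined with the loop, which contributes $v$ itself on level $\ell$, this yields one in-neighbor on each level $\ell'\in\{0,\ldots,\ell\}$, and hence a total in-degree of $\ell+1$.

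There is no real obstacle here; the whole argument is essentially an unwinding of the ``reflexive transitive closure of a branching'' characterization from \Cref{lem:realizablegraphs} plus the fact that tree paths are unique. The only thing to be slightly careful about is not to double-count the loop when ``proper ancestors'' are considered, which is why I would separate the loop contribution from the non-loop in-edges throughout.
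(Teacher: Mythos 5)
Your proof is correct and follows essentially the same route the paper sketches informally just before the observation: reduce to the branching $B$ via \Cref{lem:realizablegraphs}, note that the non-loop in-neighbors of $v$ in $G$ are precisely the proper ancestors of $v$ in $B$, and use uniqueness of tree paths to get exactly one such ancestor per level. The paper treats this as immediate and gives no formal proof, so your elaboration is exactly the intended argument.
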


\begin{definition}
For a vertex $v$ on level $\ell$ and some level $\ell'<\ell$, the \emph{$\ell'$-ancestor of $v$} is the unique in-neighbor of $v$ on level $\ell'$. The \emph{parent of $v$} is the $\ell-1$-ancestor of $v$. The maximum level of any vertex in $G$ is denoted by $\ell_{max}$.
\end{definition}

Our proposed algorithm works in two main phases. In the first phase, the \emph{levelling}, it determines the level of each vertex in $O(\log n)$ queries. In the second phase, we use a divide-and-conquer approach to partition the vertices and perform queries to find the edges within each partition simultaneously, requiring $O(\log^2 n)$ queries in total.

\begin{algorithm}[h!]
\caption{Levelling}\label{alg:levelling}
\begin{algorithmic}[1]
\State $lvl\gets$ array of $n$ zeroes\Comment{Stores the level for every vertex}
\State $q\gets \onesvec$
\For{$i\in\{0,\ldots,\lceil \log_2 n\rceil-1\}$}
    \State $r\gets (M q) \xor q$ \Comment{Issues $1$ query}
    \For{$v\in\{1,\ldots,n\}$}
        \If{$r_v=1$}
            \State $lvl[v] \gets lvl[v] + 2^{i}$
        \EndIf
    \EndFor
    \State{$q\gets q\bitwiseand r$}\Comment{Bit-wise ``and'' operation}
\EndFor
\State \textbf{return} $lvl$
\end{algorithmic}
\end{algorithm}

\begin{lemma}
\Cref{alg:levelling} correctly computes the level of each vertex, using $O(\log n)$ queries.
\end{lemma}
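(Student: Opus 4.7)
The plan is to establish by induction on the iteration index that the algorithm extracts the binary representation of $\ell(v)$ one bit per query. Let $q_i$ denote the value of $q$ at the start of iteration $i$ and $r_i$ the value computed on line~4 of that iteration. I would claim, as the main invariant, that
\begin{itemize}
    \item $q_i = \{\, v \in [n] : \text{bits } 0, 1, \ldots, i-1 \text{ of } \ell(v) \text{ are all } 1\,\}$, and
    \item $(r_i)_v = \text{bit } i \text{ of } \ell(v)$ for every $v$.
\end{itemize}
Given this invariant, bit $i$ of each $\ell(v)$ is written into $lvl[v]$ during iteration $i$, and after $\lceil \log_2 n \rceil$ iterations all bits needed to represent $\ell(v) \leq n-1 < 2^{\lceil \log_2 n \rceil}$ have been recorded. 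Since each iteration issues exactly one matrix-vector query, the bound of $O(\log n)$ queries follows.

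The first step is to re-interpret the oracle reply graph-theoretically via \Cref{obs:graphQueries}: $(Mq)_v$ is the parity of the set of in-neighbors of $v$ lying in $q$. Because $G$ is the reflexive closure of a branching, $v$ is always its own in-neighbor, so xoring with $q_v$ peels off the self-loop contribution. Thus $r_v = (Mq)_v \xor q_v$ is the parity of \emph{strict} ancestors of $v$ in $q$.

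The second step is to combine this with \Cref{obs:levelsanddegrees}, which tells us that $v$ has exactly one strict ancestor on each of the levels $0, 1, \ldots, \ell(v)-1$. Under the inductive hypothesis, counting strict ancestors of $v$ in $q_i$ is therefore equivalent to counting how many integers $k \in \{0, 1, \ldots, \ell(v)-1\}$ satisfy $k \equiv 2^i - 1 \pmod{2^i}$. A direct enumeration shows this count equals $\lfloor \ell(v)/2^i \rfloor$, whose parity is precisely bit $i$ of $\ell(v)$. This gives the second part of the invariant. For the induction step on $q$, the update $q \gets q \bitwiseand r$ keeps those $v$ already in $q_i$ (bits $0, \ldots, i-1$ of $\ell(v)$ equal to $1$) that additionally have $(r_i)_v = 1$ (bit $i$ of $\ell(v)$ equal to $1$), yielding the claimed form of $q_{i+1}$. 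The base case $i=0$ is immediate: $q_0 = \onesvec$ satisfies the vacuous condition, and $r_0$ counts all strict ancestors of $v$, which equals $\ell(v) \bmod 2$.

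I expect the only real subtlety to be the counting identity $|\{k : 0 \leq k < \ell,\ k \equiv 2^i - 1 \pmod{2^i}\}| = \lfloor \ell/2^i \rfloor$ and the fact that its parity equals bit $i$ of $\ell$; both are elementary, but they must be stated cleanly so that the inductive step reads smoothly. Everything else is bookkeeping: translating the oracle output into strict-ancestor parities via the self-loop, invoking the branching structure so that each level contributes at most one ancestor, and observing that $\lceil \log_2 n \rceil$ bits suffice because $\ell_{\max} \leq n - 1$.
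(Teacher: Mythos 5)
Your proof is correct and follows essentially the same inductive argument as the paper: peel off the self-loop by forming $(Mq)\xor q$, use \Cref{obs:levelsanddegrees} to turn the parity of strict in-neighbors in $q$ into a count over levels, and show by induction that $q$ at iteration $i$ selects exactly the vertices whose level has its low $i$ bits all set. The one cosmetic difference is that you spell out the final counting step explicitly via the identity $\lvert\{k : 0\le k<\ell,\ k\equiv 2^i-1 \pmod{2^i}\}\rvert = \lfloor \ell/2^i\rfloor$, whose parity is $Bin(\ell)_i$, whereas the paper asserts the corresponding parity claim without writing out this arithmetic; your version is marginally more self-contained but not a different route.
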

\begin{proof}
\Cref{alg:levelling} issues $O(\log n)$ queries, one per iteration of the loop at line~3.

To prove correctness, we show that in each iteration $i$, the vertices in the vector $r$ are exactly those on levels $\ell$ where $Bin(\ell)_i=1$. From this follows that the level of each vertex is correctly recovered on line~7, one bit at a time. We show this by induction on $i$.

Note that a vertex $v$ is in $r=(M q)\xor q$ if it has an odd number of non-self in-neighbors in $q$, i.e., in-neighbors in $q\setminus\{v\}$.

For $i=0$ as the base case of this induction, $q=\onesvec$ and $r$ therefore contains all vertices with an odd number of non-self in-neighbors. By \Cref{obs:levelsanddegrees}, these are exactly the vertices on odd levels, i.e., those on levels where the least significant bit is $1$.

For the induction step, assume that for some $i$, the statement holds for all iterations $i'\leq i$. Thanks to the bit-wise ``and'' on line~8, the queried vertices $q$ in iteration $i+1$ are the vertices on levels $\ell'$ with $Bin(\ell')_{i'}=1$ for all $i'\leq i$, i.e., the binary representation of $\ell'$ ends with at least $i$ ones. The vector $(M q)\xor q$ contains all vertices with an odd number of non-self in-neighbors among these queried vertices. By \Cref{obs:levelsanddegrees}, these are all vertices on levels $\ell$ with an odd number of queried levels strictly above, i.e., with $|\{\ell'<\ell: \forall i'\leq i, Bin(\ell')_{i'}=1 \}| =_2 1$. This holds exactly for the levels~$\ell$ with $Bin(\ell)_{i+1} = 1$, thus proving the claim.
\end{proof}

We now know how to compute the level of each vertex in $O(\log n)$ time. It remains to show that we can also determine the edges connecting each consecutive two levels, and thus recover the whole graph. 

We first give the intuition for a simple strategy that given a level $\ell$ finds the $\ell$-ancestor of all vertices on levels $\geq \ell+1$. We can perform $\lceil \log_2 n\rceil$ queries, where the $i$-th query $q^{(i)}$ contains all vertices $v$ such that $\{v\text{ on level }\ell \text{ with }Bin(v)_i=1\}$. If a vertex is ``hit'', i.e., it is contained in the $i$-th reply $Mq^{(i)}$, we know that its $\ell$-ancestor must be in $q^{(i)}$. As each binary representation uniquely determines an integer, after all $\lceil \log_2 n\rceil$ queries, the $\ell$-ancestor is found.

It would be too costly to use this procedure for all levels on their own, as there can be up to~$n$ levels. We thus make use of the following observation, which follows directly from the transitivity and reflexivity of the dimension influence graph, as illustrated in \Cref{fig:cancelling}:

\begin{observation}\label{obs:filtering}
Let $\ell$ be some level, $q$ be some query and $Mq$ the corresponding response. Furthermore, let $w$ be some vertex on level $\ell'>\ell$, with the $\ell$-ancestor of $w$ being $a$. We form the alternative query $q'$ with $q'_v=1 \Longleftrightarrow (q_v=1\wedge level(v)\geq \ell\wedge v\not=a)$ by removing from $q$ the vertex~$a$ as well as all vertices strictly above level $\ell$. It holds that
\[(Mq')_w=(Mq)_w\xor (Mq)_{a}.\]
\end{observation}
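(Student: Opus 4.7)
The plan is to prove this by translating the matrix-vector statement into a statement about parities of intersections of vertex sets in $G$ via \Cref{obs:graphQueries}, and then using the tree-ancestor structure of $G$ (\Cref{lem:realizablegraphs} and \Cref{obs:levelsanddegrees}) to relate the set filtered out of $q'$ to the symmetric difference of in-neighborhoods of $w$ and $a$.

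First, I would denote by $A(v)$ the set of in-neighbors of $v$ in $G$, which by the reflexive-transitive-closure structure equals $\{v\}\cup\{\text{proper ancestors of }v\text{ in the branching}\}$. By \Cref{obs:graphQueries}, for any query $r$ and any vertex $u$ we have $(Mr)_u = |r\cap A(u)| \pmod 2$. Applying this, I get
\[
(Mq)_w\xor (Mq)_a \;=\; |q\cap A(w)| + |q\cap A(a)| \;\equiv\; |q\cap (A(w)\triangle A(a))| \pmod 2.
\]

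Next, I would use \Cref{obs:levelsanddegrees} to pin down $A(w)\triangle A(a)$ exactly. Since $a$ is the $\ell$-ancestor of $w$, every ancestor of $a$ is also an ancestor of $w$, hence $A(a)\subseteq A(w)$, so the symmetric difference is just $A(w)\setminus A(a)$. Because $w$ has a unique in-neighbor on each level $0,\dots,\ell'$, the set $A(w)\setminus A(a)$ consists of exactly those elements of $A(w)$ that lie on a level strictly greater than $\ell$ — namely $w$ together with the ancestors of $w$ on levels $\ell+1,\dots,\ell'-1$.

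The final step is to identify $q'\cap A(w)$ with $q\cap (A(w)\setminus A(a))$. By definition, $q'$ keeps exactly the elements of $q$ on level $\geq\ell$ except $a$. Intersecting with $A(w)$: the unique element of $A(w)$ on level $\ell$ is $a$, which is explicitly removed, so only elements of $A(w)$ on level $>\ell$ survive; elements of $A(w)$ on level $<\ell$ are also removed by the level condition. Thus $q'\cap A(w)=q\cap(A(w)\setminus A(a))$, and applying \Cref{obs:graphQueries} once more gives $(Mq')_w=|q\cap(A(w)\setminus A(a))|\pmod 2=(Mq)_w\xor(Mq)_a$, as desired.

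There is no real obstacle here — once the query/response is reinterpreted set-theoretically via \Cref{obs:graphQueries}, everything reduces to the clean fact that in a reflexive transitive closure of a branching, ancestor sets are nested along root-to-leaf paths. The only thing to be careful about is the bookkeeping: checking that $a\in A(w)$ (so that removing $a$ genuinely affects the intersection), and confirming that no spurious level-$\ell$ element of $A(w)$ other than $a$ exists (which is exactly \Cref{obs:levelsanddegrees}).
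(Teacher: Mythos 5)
Your proof is correct and formalizes exactly the intuition the paper offers informally (via \Cref{fig:cancelling}): since $a$ is an ancestor of $w$ and $G$ is transitive, the in-neighborhood $A(a)$ is nested inside $A(w)$, so the contribution of the filtered-out vertices to $(Mq)_w$ equals their contribution to $(Mq)_a$, which is what your symmetric-difference computation captures. The paper does not spell out a proof beyond the figure caption, so your argument is a welcome rigorous version of the same approach.
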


\begin{figure}
    \centering
    \includegraphics{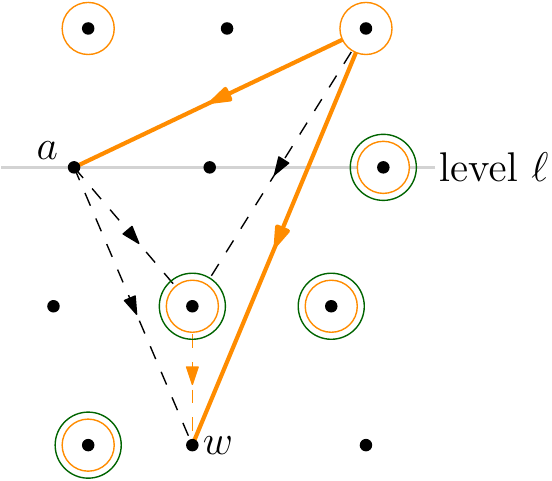}
    \caption{The orange-circled vertices are those included in the query $q$, the green-circled vertices are also in the query $q'$. All loops and all edges not induced by the vertices in $q$, $a$, and $w$ are omitted for legibility.
    The effect (bold) of the two vertices in $q$ but not in $q'$ on $a$ is the same as their effect on $w$, as $a$ is an ancestor of $w$ and the graph is transitive. This shows that~$(Mq')_w$ can be computed from $(Mq)_w$ and $(Mq)_a$.}
    \label{fig:cancelling}
\end{figure}

This observation allows us to ``filter out'' the effect of querying vertices above a certain level~$\ell$ on the vertices below~$\ell$, as long as their $\ell$-ancestors are known. Using this crucial tool, we can now solve the $Mx=y$ problem with a divide-and-conquer approach. If we use the previously given strategy to find all $\ell$-ancestors for a level $\ell$ roughly in the middle of the branching, we can then split the graph into two halves, the one above $\ell$ and the one below $\ell$. Using \Cref{obs:filtering}, we can proceed in each of the two subproblems simultaneously, as the effect of queries used to make progress in the upper half can be filtered out of the responses to the queries used to make progress in the lower half. We describe this process in \Cref{alg:divideandconquer}.

\begin{algorithm}[h!]
\caption{Divide-And-Conquer}\label{alg:divideandconquer}
\begin{algorithmic}[1]
\State{$subproblems \gets \{(0,\ell_{max})\}$}
\Statex{}\Comment{$subproblems$: a sorted set of disjoint intervals $(a_i,b_i)$ with $a_i<b_i<a_{i+1}$}
\State{$ancestor \gets$ $\ell_{max}\times n$-dimensional matrix of zeroes}
\Statex{}\Comment{$ancestor[\ell][v]$: contains the $\ell$-ancestor of vertex $v$, or $0$ if this is unknown/does not exist}
\While{$subproblems\not=\emptyset$}
    \State{$q\gets (0,\ldots,0)^T$}
    \For{$s\in\{0,\ldots,\lceil\log_2 n\rceil - 1\}$}\Comment{Perform binary search}
        \ForAll{$(a_i,b_i)\in subproblems$}
            \State{$m_i\gets \lfloor\frac{a_i+b_i}{2}\rfloor$}\Comment{Compute median level of subproblem}
            \ForAll{vertices $v$ on level $m_i$}
                \If{$Bin(v)_s=1$}
                    \State{$q_v \gets 1$}\Comment{Add $v$ to query}
                \EndIf
            \EndFor
        \EndFor
        \State{$r\gets M q$}\Comment{Issue a query}
        \ForAll{$(a_i,b_i)\in subproblems$}
            \ForAll{vertices $v$ on levels $>b_i$}
                \State{$r_v\gets r_v\xor r_{ancestor[b_i][v]}$}\Comment{Filter effect of level $m_i$ on levels $>b_i$}
            \EndFor
            \ForAll{vertices $v$ on levels $\{m_i+1,\ldots,b_i\}$}
                \If{$r_v=1$}\Comment{Detect ancestor}
                    \State{$ancestor[m_i][v]\gets ancestor[m_i][v] + 2^s$}
                \EndIf
            \EndFor
        \EndFor
    \EndFor
    \ForAll{vertices $v$, levels $\ell_1>\ell_2$}\Comment{``Transitivify'' $ancestor$}
        \If{$ancestor[\ell_1][v]\not=0$ and $ancestor[\ell_2][ancestor[\ell_1][v]]\not=0$}
            \State{$ancestor[\ell_2][v]\gets ancestor[\ell_2][ancestor[\ell_1][v]]$}
        \EndIf
    \EndFor
    \ForAll{$(a_i,b_i)\in subproblems$}\Comment{Split subproblems in half}
        \State{remove $(a_i,b_i)$ from $subproblems$}
        \If{$m_i>a_i$}
            \State{add $(a_i,m_i)$ to $subproblems$}
        \EndIf
        \If{$b_i>m_i+1$}
            \State{add $(m_i+1,b_i)$ to $subproblems$}
        \EndIf
    \EndFor
\EndWhile
\end{algorithmic}
\end{algorithm}

This algorithm keeps a list of subproblems, where each subproblem is described by an interval of levels $\{a_i,a_i+1,\ldots, b_i\}$, denoted by the tuple $(a_i,b_i)$. It is important that these subproblems are disjoint. In each iteration of the main loop, a median level $m_i$ of each subproblem is picked, and the previously mentioned procedure is performed to determine the $m_i$-ancestors of all vertices in the subproblem (lines~5--17). \Cref{obs:filtering} is applied with $\ell:=b_i$ to filter out the effect of the vertices queried for the subproblem $(a_i,b_i)$ from the later subproblems (lines~13 and~14). The transitive property of the ancestor relation is applied to make sure that the $m_i$-ancestors are also known for all vertices on levels below the considered subproblem~$(a_i,b_i)$ (lines~18--20). Finally, each subproblem is split into two parts, the one above $m_i$, and the one strictly below $m_i$. Subproblems consisting of a single level are ignored, as they contain no more edges to discover~(lines~21--26).

\newpage
\begin{lemma}
Knowing the level of each vertex, \Cref{alg:divideandconquer} correctly determines the parent of every vertex $v$ in $O(\log^2 n)$ queries.
\end{lemma}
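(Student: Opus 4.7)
The plan is to separately bound the query complexity and verify correctness. The query count is immediate: the inner \texttt{for}-loop over $s$ (line~5) issues exactly one query per iteration, so $\lceil\log_2 n\rceil$ queries per pass of the outer \texttt{while}-loop. Since each pass halves the size of every active subproblem and the initial subproblem $(0,\ell_{max})$ has size at most $n$, the outer loop terminates after $O(\log n)$ passes, giving $O(\log^2 n)$ queries in total.

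For correctness I would first analyse a single subproblem $(a_i,b_i)$ in isolation. By \Cref{obs:levelsanddegrees} each vertex $v$ with $\ell_v\in(m_i,b_i]$ has exactly one in-neighbor on level $m_i$, namely its $m_i$-ancestor $u$. Since $q$ contains precisely the level-$m_i$ vertices whose $s$-th label bit is one, \Cref{obs:graphQueries} gives $(Mq)_v = Bin(u)_s$, and sweeping over $s$ recovers the full label of $u$ via the accumulation on line~17.

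The main technical obstacle is to show that this clean extraction survives the simultaneous processing of many subproblems. For $v$ in subproblem $i^{*}$, the raw response $(Mq)_v$ picks up a contribution from $v$'s $m_j$-ancestor for every $j\leq i^{*}$ (for $j>i^{*}$ the medians $m_j$ lie strictly above $\ell_v$ by disjointness of subproblems and therefore contribute nothing). The filtering on line~14 is designed to cancel the contributions of subproblems $1,\dots,i^{*}-1$ before detection is performed for subproblem $i^{*}$. The key algebraic fact is \Cref{obs:filtering}: for $v$ in a later subproblem and $w$ equal to $v$'s $b_i$-ancestor, the in-neighbors of $v$ on levels $\leq b_i$ coincide with those of $w$ by transitivity of $G$, so XOR-ing $r_v$ with $r_w$ exactly cancels subproblem $i$'s median contribution to $r_v$. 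An induction over the subproblem order, together with the disjointness $b_j<a_{j+1}$, then shows that at the moment of detection for subproblem $i^{*}$ the value $r_v$ equals the $s$-th bit of $v$'s $m_{i^{*}}$-ancestor. This filtering is only legal if $v$'s $b_i$-ancestor is already known whenever it is needed; I would establish this invariant by induction on the outer iterations. If $b_i=\ell_{max}$ nothing is required. Otherwise $b_i$ must have been the median $m^{*}$ of some earlier iteration whose subproblem $(a^{*},b^{*})$ strictly enclosed $(a_i,b_i)$; that iteration computed $b_i$-ancestors of all vertices on levels $(b_i,b^{*}]$, and the transitivify loop on lines~18--20 chains these with the $b^{*}$-ancestors (guaranteed by the inductive hypothesis) to supply the $b_i$-ancestor of every $v$ on levels $>b_i$.

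Finally, to see that the parent of every vertex is eventually recovered, observe that for any $v$ on level $\ell_v\geq 1$ the two levels $\ell_v-1$ and $\ell_v$ start together in the initial subproblem $(0,\ell_{max})$ and remain in a common subproblem until precisely the iteration whose median equals $\ell_v-1$. At that iteration $v$ lies in the upper half with $\ell_v>m$, so the algorithm computes $v$'s $m$-ancestor, which is exactly its parent. Together with the filtering invariant above, this proves the lemma.
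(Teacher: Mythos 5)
Your proof is correct and follows essentially the same structure as the paper's: bound the query count by halving subproblem sizes over $O(\log n)$ passes of $O(\log n)$ queries each, analyse detection within one subproblem, invoke the filtering observation together with an induction over the subproblem order to handle simultaneous processing, and establish by induction over the outer iterations that the $b_i$-ancestors needed for filtering are always available via the transitivify step. The only cosmetic difference is that the paper concludes by showing that \emph{all} $ancestor[\ell][v]$ entries are eventually filled in (every level is at some point a median or subproblem end), whereas you argue directly that the parent of each $v$ is found by tracking when levels $\ell_v-1$ and $\ell_v$ are first separated; both conclusions are correct and rely on the same internal invariant.
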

\begin{proof}
Let us first assume that all $ancestor$ values required on line~14 are already known. By \Cref{obs:filtering} being applied on lines~13 and~14, the values $r_v$ read on line~16 are only influenced by the queried vertices on the level $m_i$ --- the effect of earlier subproblems is filtered out. Lines~4--17 therefore correctly determine the $m_i$-ancestors for all vertices on levels in $\{m_i+1,\ldots, b_i\}$. It only remains to prove that all $ancestor$ values required on line~14 are known, and that in the end of the algorithm, \emph{all} $ancestor[\ell][v]$ values are known.

We show the invariant that at the beginning of the main while loop, the values $ancestor[b_i][v]$ are known for all $v\in [n]$ and all $(a_i,b_i)\in subproblems$. At the beginning of the algorithm, this trivially holds, as there is only one subproblem $(0,\ell_{max})$, and no vertex has an $\ell_{max}$-ancestor. Whenever a subproblem $(a_i,b_i)$ is split into $(a_i,m_i)$ and $(m_i+1,b_i)$, there is only one new end of a subproblem, namely~$m_i$. On lines~4--17, the $m_i$-ancestor has just been computed for all vertices on levels in~$\{m_i+1,\ldots, b_i\}$, and the~$b_i$-ancestor was previously known for all vertices. On lines~18--20, this is combined to compute the $m_i$-ancestor for all vertices. The invariant thus holds.

To prove that all $ancestor[\ell][v]$ values are known at the end of the algorithm, we observe that every level is the end or the median of some subproblem at least once. In case the level is the end~$b_i$ of a subproblem, the $b_i$-ancestors are known by the previously proven invariant. In case the level is the median $m_i$ of a subproblem, the $m_i$-ancestors are computed in that iteration.

We conclude that \Cref{alg:divideandconquer} correctly determines all $ancestor[\ell][v]$ values, and thus also the parent of every vertex. In each iteration of the main while loop, $O(\log n)$ queries are issued, and as the size of the subproblems is halved in each iteration, there are at most $O(\log n)$ iterations. We conclude that \Cref{alg:divideandconquer} requires $O(\log^2 n)$ queries.
\end{proof}

\begin{proof}[Proof of \Cref{thm:sublinear}]
Using \Cref{alg:levelling} to compute the level of each vertex in the dimension influence graph $G$ and \Cref{alg:divideandconquer} to then compute all ancestors of all vertices, $G$ can be completely recovered in $O(\log^2 n)$ matrix-vector queries. Knowing $M$, $Mx=y$ can be solved with no additional queries using Gaussian elimination. By \Cref{thm:equivOfAlgorithms}, the sink of a realizable \matousek{}-type USO can be found in $O(\log^2 n)$ vertex evaluations.
\end{proof}

We believe that $\Theta(\log^2 n)$ is the best-possible number of queries to find the sink of a realizable \matousek{}-type USO. Due to the rigid structure of the dimension influence graph, a large portion of the $n$ bits of information in every reply from the oracle is redundant. We did not manage to prove this matching lower bound, but we can show a lower bound of $\Omega(\log n)$. Note that our algorithm solves the harder problem of recovering the whole structure of the USO, while the following lower bound holds for the easier problem of only finding the sink.

\begin{theorem}\label{thm:realizablelowerbound}
Every deterministic algorithm requires at least $\Omega(\log n)$ queries to find the sink of an $n$-dimensional realizable \matousek{}-type USO in the worst case.
\end{theorem}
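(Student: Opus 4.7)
My plan is to prove the lower bound via the equivalence of \Cref{thm:equivOfAlgorithms} and argue entirely in the $Mx=y$ model: it suffices to show that any deterministic algorithm requires at least $\lceil\log_2 n\rceil-1$ matrix-vector queries to solve $Mx=y$ when $M$ ranges over adjacency matrices of reflexive transitive closures of branchings. By \Cref{thm:equivOfAlgorithms} this translates to $\Omega(\log n)$ vertex evaluations on realizable \matousek{}-type USOs.

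The strategy is an adversary argument. I would fix a right-hand side $y$ (for instance $y=\mathbf{1}$) and construct a family $\mathcal{F}=\{M_1,\ldots,M_n\}$ of adjacency matrices of branchings, each one realizable by \Cref{lem:realizablegraphs}, such that the $n$ solutions $x_j = M_j^{-1}y$ are pairwise distinct. A natural candidate is a family parameterized by a hidden index $j^*\in[n]$ that plants $j^*$ at a single distinguished position within an otherwise canonical branching (e.g., a path whose first non-root slot is occupied by $j^*$); $y$ is chosen so that the map $j^*\mapsto x_{j^*}$ is injective. The adversary maintains the set $S\subseteq[n]$ of indices still consistent with the replies so far and, upon receiving a query $q$, partitions $S$ by the hypothetical reply $M_{j^*}q$ and commits to the reply supported by the largest part, so that $|S_{k+1}|\geq\lceil|S_k|/2\rceil$. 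Starting from $|S_0|=n$, this halving gives $|S_k|\geq 2$ for every $k<\lceil\log_2 n\rceil$, and since distinct surviving indices correspond to distinct solutions $x_{j^*}$, the algorithm cannot commit to a correct output before round $\lceil\log_2 n\rceil-1$.

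The main obstacle is engineering $\mathcal{F}$ so that the halving property actually holds against an adaptive algorithm. A na\"ive choice such as the stars rooted at $j^*$ is broken by the single query $q=\mathbf{1}$, whose $n$ replies $M_jq$ are pairwise distinct and force the adversary to commit immediately. Slightly more symmetric choices (such as paths with a canonical tail ordering) improve matters only to a small constant number of reply classes per query, insufficient for genuine halving. A viable construction must therefore embed the hidden index $j^*$ into a branching with enough structural symmetry that for every query $q$ the multiset $\{M_{j^*}q\}_{j^*\in S}$ collapses to only two classes of roughly equal size; equivalently, one may replace $|S|$ by a more refined potential $\Phi(S)$ that decays by a bounded factor per round. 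Establishing the existence of such an adversary-friendly family within the rigid class of reflexive transitive closures of branchings (so that realizability is preserved) is the technical heart of the proof and explains the gap to the $O(\log^2 n)$ upper bound of \Cref{thm:sublinear}.
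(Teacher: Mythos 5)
Your high-level framework is right and matches the paper's: work in the $Mx=y$ model via \Cref{thm:equivOfAlgorithms}, fix $y=\onesvec$, and run an adversary that keeps the algorithm in the dark for $\Omega(\log n)$ rounds. But the proposal stops exactly where the paper's proof begins in earnest, and the specific route you sketch -- a fixed family $\{M_1,\ldots,M_n\}$ of $n$ branchings indexed by a hidden $j^*$, with the adversary halving the surviving set $S$ -- is one you yourself show cannot work as stated (the star and path examples), and you do not produce the ``more refined potential $\Phi(S)$'' or the ``adversary-friendly family'' that would rescue it. That missing piece is not a detail; it is the entire content of the theorem.

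The paper's adversary is structurally different from a fixed-family argument. It does \emph{not} commit to $n$ candidate matrices up front. Instead it maintains, adaptively, a single current branching that is always a \emph{disjoint union of directed paths}, and tracks which paths are \emph{good}, meaning every query so far hits the path in an even number of vertices. The two crucial observations you are missing are: (i) if $p_1$ is good and $p_2$ is any other path, attaching $p_2$ to the end of $p_1$ changes the solution (the head of $p_2$ stops being a root) while leaving all previous replies unchanged, because in the transitive closure the extra influence that $p_1$'s queried vertices exert on $p_2$'s vertices cancels out mod $2$ by goodness of $p_1$; so the algorithm cannot know the answer while a good path and a second path coexist; and (ii) when a new query hits an odd number of vertices in $m$ of the currently good paths, the adversary pairs those $m$ paths up and joins each pair, producing $\lfloor m/2\rfloor$ new good paths, so the number of good paths drops from $k$ to at least $\lfloor k/2\rfloor$. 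Starting from $n$ good singleton paths this yields $\Omega(\log n)$ rounds. The potential function is thus ``number of good paths,'' not $|S|$, and the set of branchings still consistent with the transcript is exponentially large and implicit, not a fixed list of $n$. Your proposal would need this (or an equivalent) construction supplied to be a proof; as written it is a correct problem reduction plus an acknowledged open gap.
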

\begin{proof}
We prove this in the matrix-vector query model. We set $y=\onesvec$ and begin with $M=I$. As~$y=\onesvec$, the desired $x$ such that $Mx=y$ is the vector corresponding to all roots of the dimension influence graph described by $M$, as their out-neighbor sets form a partition of all vertices.

During the construction, we enforce the invariant that the branching underlying the dimension influence graph is the union of disjoint paths. We call a path $p$ \emph{good}, if for each previously given query $q$, there is an even number of queried vertices within the path, i.e., $|q\cap p|=_2 0$.

As long as there is at least one good path $p_1$ and at least one other path $p_2$, all replies given to the algorithm are also consistent with the graph in which $p_2$ is attached to the end of $p_1$. In this alternative graph, the first vertex of $p_2$ is not a root. These two graphs thus have different solutions but are indistinguishable to the algorithm, and we conclude that the algorithm cannot know the set of roots at this point.

Note that at the beginning of the construction, when $M=I$ and no queries have arrived yet, the branching underlying the graph consists of $n$ paths of length $0$, which are all good.

Whenever the algorithm queries an odd number of vertices of some good paths, these paths are paired up. When two paths $p_1$ and $p_2$ are paired up, $p_2$ is attached to the end of $p_1$~(see \Cref{fig:goodpaths}). The query is then answered according to this new graph. As each previous query contained an even number of vertices in $p_1$ (as $p_1$ is good), the influence of these vertices onto the vertices in $p_2$ cancels out. The replies to these queries therefore remain consistent. As the newest query would have contained an odd number of vertices of both $p_1$ and $p_2$, the joined path is still good.

If there is an odd number of paths to be paired up, there is one leftover path. This path can not be paired up. It is no longer good, and will not be changed anymore.

We observe that when there are $k$ good paths before a query, there are at least $\lfloor k/2\rfloor$ good paths remaining after the query. We showed that the algorithm can only know the solution when there are no good paths left, or only a single path in total. As the graph contains $n$ good paths before the first query, it takes at least $\Omega(\log n)$ queries until the algorithm can know the solution.
\end{proof}

\begin{figure}
\begin{subfigure}{0.48\textwidth}
    \centering
    \includegraphics[page=1,scale=0.8]{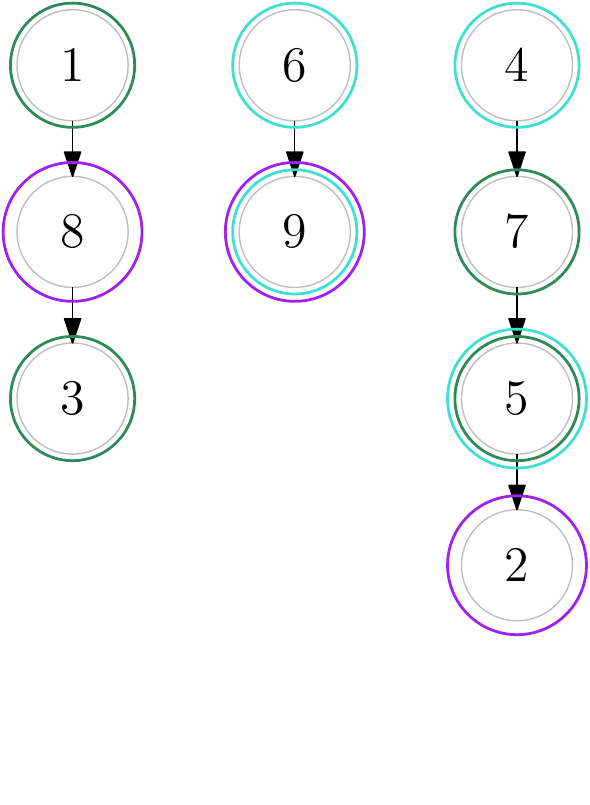}
    \caption{Before the purple query.}
\end{subfigure}
\begin{subfigure}{0.48\textwidth}
    \centering
    \includegraphics[page=2,scale=0.8]{figs/goodpaths.pdf}
    \caption{After the purple query.}
\end{subfigure}
\caption{After the green and cyan queries, all three paths in the left figure are good. To accommodate the purple query, the left and middle paths are paired up and joined. The combined path remains good. The right path is leftover, and is no longer good after the purple query. Loops and transitive edges are not shown.}\label{fig:goodpaths}
\end{figure}

\section{Conclusion}\label{sec:conclusion}
We have determined the query complexity of finding the sink in general \matousek{}-type USOs exactly. For realizable \matousek{}-type USOs, there remains an $O(\log n)$ gap between our lower and upper bound. While it would be interesting to close this gap, our main result --- the gap between the realizable and the general case --- is already well established by our bounds.

As the best-known sink-finding algorithms are randomized, it would be desirable to establish a complexity gap also for randomized algorithms. The upper bound naturally carries over, as all deterministic algorithms are also randomized algorithms. The most natural approach to establish a lower bound is applying Yao's principle~\cite{yao1977principle}: One needs to find a distribution over the \matousek{}-type USOs, such that every deterministic algorithm requires $\omega(\log^2 n)$ queries in expectation to find the sink of an USO drawn from this distribution. This then yields a lower bound for randomized algorithms. It is conceivable that a suitable distribution over the \matousek{}-type USOs created by our lower bound can achieve this, but we did not manage to prove any such result. On the flip side, it might also be possible to improve upon our algorithms both for the realizable and the general case using randomness, but we did not observe any straightforward benefit of randomness.

The most important open question implied by our results is whether there are other (larger and more practically relevant) USO classes which also admit such a complexity gap. Ultimately, we hope for such a gap to exist for the class of \emph{all} USOs. Considering the lack of a strong lower bound, this goal is still far away.

Finally, the connections between \matousek{}-type USOs and D-cubes can be examined further. Are (some of) the realizable \matousek{}-type USOs also D-cubes? Can our techniques used to find the sink of a \matousek{}-type USO be adapted to work for the less rigid D-cubes?

\newpage
\bibliographystyle{plainurl}
\bibliography{USO}

\end{document}